\documentclass{article}

\usepackage{arxiv}

\usepackage[utf8]{inputenc} 
\usepackage[T1]{fontenc}    
\usepackage{hyperref}       
\usepackage{url}            
\usepackage{booktabs}       
\usepackage{amsfonts}       
\usepackage{nicefrac}       
\usepackage{microtype}      
\usepackage{lipsum}

\usepackage{cite}
\usepackage{amsmath,amssymb,amsfonts}
\newenvironment{pf}{\begin{proof}}{\end{proof}}

\newenvironment{thm}{\begin{theorem}}{\end{theorem}}
\usepackage{graphicx}
\usepackage{algorithm}
\usepackage[noend]{algpseudocode}
\usepackage{hyperref}
\usepackage{textcomp}
\usepackage{threeparttablex} 
\usepackage{adjustbox}
\usepackage{mathtools}
\usepackage{caption}
\usepackage{float}
\usepackage{stmaryrd}
\usepackage{epstopdf}
\usepackage{multirow}
\usepackage{pifont}
\usepackage{caption}
\usepackage{subcaption}
\usepackage{xcolor}

\newcommand{\RD}[1]{{#1}}

\newcommand{\R}{{\mathbb{R}}}

\newcommand{\G}{{\mathcal{G}}}
\newcommand{\F}{{\mathcal{F}}}

\newcommand{\ds}{{\ \mathsf{d}s}}
\newcommand{\Id}{{\mathbb{I}}}
\newcommand{\Rl}{{\mathfrak{R}}}
\newcommand{\qed}{\hfill $\square$}

\newtheorem{theorem}{Theorem}[section]

\newtheorem{assumption}{Assumption}

\newtheorem{definition}[theorem]{Definition}
\newtheorem{lemma}[theorem]{Lemma}
\newtheorem{remark}[theorem]{Remark}
\newtheorem{problem}[theorem]{Problem}
\newtheorem{proof}[theorem]{Proof}

\title{Temporal Logic Resilience for Continuous-time Systems}

\author{
Ratnangshu Das$^{1}$, 
Negar Monir$^{2}$, 
Youssef Ait Si$^{3}$, 
Adnane Saoud$^{3}$, 
Sadegh Soudjani$^{4}$, 
Pushpak Jagtap$^{1}$ 
\\[1ex]
$^{1}$Indian Institute of Science, Bangalore, India \\
\texttt{\{ratnangshud, pushpak\}@iisc.ac.in} \\
$^{2}$Newcastle University, United Kingdom \\
\texttt{S.Seyedmonir2@newcastle.ac.uk} \\
$^{3}$University Mohammed VI Polytechnic, Benguerir, Morocco \\
\texttt{\{youssef.aitsi, adnane.saoud\}@um6p.ma} \\
$^{4}$Max Planck Institute for Software Systems, Germany, and University of Birmingham, United Kingdom \\
\texttt{sadegh@mpi-sws.org}
}

\begin{document}
\maketitle
\setcounter{footnote}{0}
\begin{abstract}                          
In this paper, we present a novel framework for quantifying a lower bound on resilience in continuous-time (non)linear systems subject to external disturbances while ensuring satisfaction of signal temporal logic specifications. Unlike robustness, which evaluates how well a system satisfies a specification under a given disturbance, resilience measures the maximum disturbance a system can tolerate from a given initial state while maintaining specification satisfaction. We first derive bounds on the perturbed trajectories and then use them to formulate a computational method based on scenario optimization to efficiently compute the maximum admissible disturbance. We validate our approach through case studies, including dc motor, temperature regulation, a nonlinear numerical example, and a vehicle collision avoidance case.
\end{abstract}
\keywords{Resilience, Signal Temporal Logic, Continuous time System.}

\section{Introduction}
Cyber-physical systems are widely used in safety-critical applications such as autonomous vehicles, robotic systems, and industrial automation, where they must operate reliably despite disturbances, such as modeling errors and external perturbations. These uncertainties can cause deviations from expected behavior; therefore, it is important to ensure resilience, i.e., the system should continue to satisfy its desired specifications even under disturbances. A well-established formal framework for specifying such complex time-sensitive tasks is through Signal Temporal Logic (STL) \cite{STL_OdedMaler}. However, a fundamental question remains: How much disturbance can a system tolerate while still satisfying an STL specification?
\RD{Unlike fault-tolerance, which focuses on detecting and recovering from specific failures, we adopt a disturbance-tolerance view of resilience, where the goal is to quantify the maximum disturbance magnitude under which the system is guaranteed to satisfy its STL specification.}

As explored by \cite{STL_robust} and \cite{ MITL_robust}, research in temporal logic-based verification has focused on robustness, which quantifies how well a system satisfies a given specification under a fixed disturbance. Various approaches have been proposed to evaluate robustness, including discounting modalities by \cite{almagor2014discounting}, which reduce the influence of distant events, and averaging modalities by \cite{bouyer2014averaging}, which extend classical temporal logic operators with min-, max- and long-run average computations. 
\RD{
Several computational techniques have been developed, with applications in runtime verification \cite{runtimeverification} and temporal logic falsification \cite{temporallogicfalsification}. Recent monitoring-based works, such as \cite{zhong2021extending, roohi2018parameter}, extend STL semantics to monitor uncertain signals, by assuming that deviation sets are already given and by focusing on verifying whether the specification holds under such uncertainties.}
\RD{
Recent works such as \cite{chen2022stl} apply STL-based analysis to characterize resilience properties, including recoverability and durability. While previous studies focus on evaluating satisfaction under a fixed disturbance, our work instead quantifies the admissible disturbance set itself, following the resilience notion introduced by \cite{resilience_water} and \cite{Resilience01}.}

\RD{
Control-oriented approaches have also studied disturbance bounds for STL satisfaction, for example using control barrier functions to characterize admissible disturbances for closed-loop systems \cite{akella2021disturbance}. Although general, such approaches may introduce additional conservatism and are typically restricted to STL fragments that can be encoded through barrier-function constraints.}

Beyond temporal logic verification, resilience has been widely studied in critical infrastructure \cite{resilience_critical}, IT systems \cite{resilience_IT}, and water resource recovery facilities \cite{resilience_water}, where it is commonly interpreted as the ability to maintain acceptable performance under disturbances. However, resilience in cyber-physical systems under STL specifications remains largely unexplored.

\cite{Resilience00} introduced resilience for discrete-time systems using linear temporal logic over finite traces \cite{LTLF}. Extending this to continuous-time systems is a significant challenge, as disturbances evolve continuously and affect system behavior in more complex ways. Unlike discrete-time settings, where disturbances can be modeled as discrete transitions, continuous-time disturbances require analysis over an infinite space of possible deviations. Additionally, STL includes explicit timing constraints, which require analyzing infinitely many possible trajectories arising from continuous disturbance signals. These factors make STL-based resilience analysis considerably more complex and computationally demanding.

In this paper, we develop a framework to compute a lower bound on resilience for continuous-time (non)linear systems under STL specifications. \RD{To understand how disturbances affect system trajectories, we first derive tight bounds on perturbed trajectories, improving over element-wise absolute matrix bounds \cite{Absolute_kofman} and Grönwall-type estimates \cite{akella2021disturbance}.} Using these bounds, we compute the maximum admissible disturbance via scenario optimization \cite{esfahani2014performance} and ensure theoretical correctness. The framework is validated through several case studies, demonstrating its applicability across linear and nonlinear systems.

\section{Preliminaries} \label{sec:prob}
\textbf{Notations:} 
The symbol $|\cdot|$ denotes the element-wise magnitude of a matrix or a vector,  $\|\cdot\|$ is its 2-norm, and $\|\cdot\|_{\infty}$ is its sup norm.
For $A, B \in \R^{m \times n}$, the inequalities, $A \preceq B$ (and $A \succeq B$) represents $A_{ij} \leq B_{ij}$ (and $A_{ij} \geq B_{ij}$), for all $i \in [1;m]$ and $j \in [1;n]$.
The set $\mathcal{B}_{\infty}(\varepsilon) := \{v \in \R^{n} | -\varepsilon \preceq v \preceq \varepsilon\}$ is a hyperrectangle in $\R^n$ centered at origin. 
The symbol $\oplus$ denotes the Minkowski sum, which for two sets $X, Y \subset \R^n$ is defined as $X \oplus Y := \{ x + y \mid x \in X, y \in Y \}$.
$\Rl(\lambda)$ is the real part of $\lambda$.
All other notations follow standard mathematical conventions.

\subsection{System Definition}
A continuous-time system is defined by the tuple $\Sigma_{nl} = (X, D, f)$, where $X \subseteq \R^n$ is the state space, $D \subset \R^n$ is the disturbance space, and $f: \R^{n} \rightarrow \R^{n}$ define the system dynamics. The evolution of the state $x$, under additive disturbance $d$ is given by the differential equation:
\begin{align}
    \Sigma_{nl}: \dot{x} = f(x) + d, \ x: \R_0^+ \rightarrow X, \ d: \R_0^+ \rightarrow D.  \label{eqn:sysdyn_nl}
\end{align}
We assume existence of a unique solution under a continuous signal $d$.
\RD{We consider the set of admissible disturbances to be given by an infinity-norm ball centered at the origin,
$D := \mathcal{B}_\infty(\varepsilon),$
where $\varepsilon > 0$ denotes the disturbance radius. That is, $d:\R_0^+ \rightarrow \mathcal{B}_\infty(\varepsilon)$ represents a bounded disturbance signal satisfying $\|d(t)\|_\infty \le \varepsilon$ for all $t \ge 0$.}

Throughout most of this paper, we focus on continuous-time linear systems, where the evolution of the system $\Sigma_{l} = (X, D, A)$ is governed by the system matrix $A \in \R^{n \times n}$, leading to linear dynamics
\begin{align}
    \Sigma_{l}: \dot{x} = Ax + d, \ x: \R_0^+ \rightarrow X, \ d: \R_0^+ \rightarrow D. \label{eqn:sysdyn_lin}
\end{align}
In Section \ref{sec:nonlin}, we extend the analysis for linear systems to nonlinear systems, deriving resilience metrics under some additional assumptions.

{For a given initial state $x_0 \in X$, the state trajectory evolving over the time interval $[0,t_f]$, with $t_f \in \R_0^+$, under the influence of disturbance $d$ is denoted as $\xi_d(x_0): [0,t_f]\rightarrow \R^n$. The trajectory’s value at time $t$ is represented by $\xi_d(x_0,t)$, where $\xi_d(x_0,t) \in X$ for all $t \in [0,t_f]$.}

\subsection{Signal Temporal Logic}
Signal Temporal Logic (STL) \cite{STL_OdedMaler} is a formal language used to specify the spatial, temporal, and logical properties of continuous-time signals. The set of STL formulae can be recursively expressed using predicates $\mathsf{p}$. Consider the predicate function $h:\R^n \rightarrow \R$, then the predicate $\mathsf{p}$ is true if $h(s) \geq 0$, (i.e.,
$\mathsf{p} := \top$, if $h(s) \geq 0$), and it is false if $h(s) < 0$.
For example, a predicate $\mathsf{p}:=(s \geq 1)$ can be represented by the predicate function $h(s):=s-1$. 

An STL formula $\psi$ is recursively defined using predicates, Boolean logic, and temporal operators:
\begin{equation}\label{eqn:stl}
    \psi := \top \ | \ \mathsf{p} \ | \ \neg \psi \ | \ \psi_1 \land \psi_2 \ | 
    \ \G_{[a, b]}\psi \ | \ \F_{[a, b]}\psi,
\end{equation}
where $\mathsf{p}$ is a predicate, and $\psi_1, \psi_2$ are STL formulae. Boolean operators for negation and conjunction are denoted by $\neg$ and $\land$, respectively. Symbols $\F$ and $\G$ represent the temporal operators - eventually and always, respectively. The time interval $[a, b]$ is defined with $a, b \in \R_0^+$, and satisfies $a \leq b$.
The satisfaction relation $(x,t)\models\psi$ denotes if a signal $x: \R_0^+ \rightarrow \R^n$, possibly a solution of \eqref{eqn:sysdyn_lin}, satisfies an STL formula $\psi$ at time $t$. The STL semantics \cite{STL_OdedMaler} for a signal $x$ is recursively given by:
\begin{align*}
    &(x,t)\models\mathsf{p} &&\Leftrightarrow \  h(x(t))\ge0, \\
    &(x,t)\models\neg\psi &&\Leftrightarrow \  \neg((x,t)\models\psi), \\
    & (x,t)\models\psi_1\land\psi_2 &&\Leftrightarrow \  (x,t)\models\psi_1\land (x,t) \models\psi_2, \\
    &(x,t)\models{\F}_{[a, b]}\psi&&\Leftrightarrow \  \exists t' \in [t+a,t+b], (x,t')\models\psi, \\
    &(x,t)\models{\G}_{[a, b]}\psi&&\Leftrightarrow \  \forall t' \in [t+a,t+b], (x,t')\models\psi.
\end{align*}

The formula $\psi$ satisfied by $x$, denoted by $x\models \psi$, if and only if $(x,0)\models\psi$. A key feature of STL is its robustness metric $\rho^{\psi}(x,t)$ \cite{STL_robust}, which quantifies how strongly a signal $x$ satisfies or violates a specification $\psi$. Specifically, the sign of the robustness metric gives the indication of satisfaction: $(x,t) \models \psi$ if $\rho^{\psi}(x,t) > 0$, and the magnitude indicates how strongly it satisfies. The robustness semantics for the mentioned STL formulae are recursively defined as
\begin{subequations}\label{eqn:stl_rho}
\begin{align}
    &\rho^{\mathsf{p}}(x,t)= h (x(t)), \label{eqn:stta}\\
    &\rho^{\neg\psi} (x,t)= -\rho^{\psi} (x(t)), \\
    &\rho^{\psi_1\land\psi_2} (x,t)= \min (\rho^{\psi_1} (x,t), \rho^{\psi_2} (x,t)),\\
    &\rho^{{\F}_{[a, b]}\psi} (x,t)= \max_{t_1 \in [t+a,t+b]} \rho^{\psi} (x,t_1),\\
    &\rho^{{\G}_{[a, b]}\psi} (x,t)= \min_{t_1 \in [t+a,t+b]} \rho^{\psi} (x,t_1).\label{eqn:stte}
\end{align}
\end{subequations}

We adopt the shorthand notation $\rho^{\psi}(x)$ to represent the robustness at $t = 0$ and consider a sufficiently large time horizon $[0,t_f]$ to verify the satisfaction of the formula $\psi$.

\cite{STL_Lip_Mangharam} and \cite{STL_Lip_Sadegh} have studied the continuity and smoothness of the STL robustness metric $\rho^\psi(x)$. Building on these works, we now formally establish its Lipschitz continuity.
\begin{thm}
\label{thm:rho_lip}
    Let $\psi$ be an STL formula over the time horizon $[0,t_f]$ with all predicate functions $h: \R^n \rightarrow \R$ being Lipschitz continuous. Then the robustness function $\rho^{\psi}(x)$ is Lipschitz continuous with respect to the sup norm:
    \begin{equation}\label{eqn:rho_lip}
        | \rho^{\psi}(x) - \rho^{\psi}(y) | \leq \mathcal{L}^{\psi}\|x-y\|_{\infty}, \ \mathcal{L}^{\psi} \in \R_0^+
    \end{equation}
    for all bounded continuous signals. 
\end{thm}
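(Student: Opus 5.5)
We will prove the claim by structural induction on the grammar \eqref{eqn:stl}. The stronger statement we carry through the induction is the following: for every subformula $\varphi$ of $\psi$ there is a constant $\mathcal{L}^{\varphi}\in\R_0^+$ such that
\begin{equation*}
|\rho^{\varphi}(x,t)-\rho^{\varphi}(y,t)|\le \mathcal{L}^{\varphi}\|x-y\|_{\infty}
\end{equation*}
holds uniformly for all $t$ in the relevant evaluation window $[0,t_f]$. The sup norm here is taken over $[0,t_f]$. The statement of Theorem~\ref{thm:rho_lip} is then the case $t=0$, $\varphi=\psi$. Bounded continuous signals ensure that every robustness value is finite and that the max/min over compact time intervals are attained.

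\textbf{Base cases.} For $\top$, assign a constant robustness (e.g.\ $+\infty$ or a fixed large value); its difference is $0$, so $\mathcal{L}^{\top}=0$. For a predicate $\mathsf{p}$ with Lipschitz function $h$ of constant $L_h$ (with respect to $\|\cdot\|_\infty$ on $\R^n$; any norm is equivalent up to a dimension constant), \eqref{eqn:stta} gives $|h(x(t))-h(y(t))|\le L_h\|x(t)-y(t)\|_\infty\le L_h\|x-y\|_\infty$. Hence $\mathcal{L}^{\mathsf{p}}=L_h$.

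\textbf{Inductive steps.} Negation preserves the constant, since $|-a+b|=|a-b|$. For conjunction we use the elementary inequality $|\min(a_1,a_2)-\min(b_1,b_2)|\le\max(|a_1-b_1|,|a_2-b_2|)$, which yields $\mathcal{L}^{\psi_1\land\psi_2}=\max(\mathcal{L}^{\psi_1},\mathcal{L}^{\psi_2})$. For $\F_{[a,b]}$ and $\G_{[a,b]}$ we use the analogous fact for sup/inf over an index set: $|\sup_{s}f(s)-\sup_{s}g(s)|\le\sup_s|f(s)-g(s)|$. Applied with $s=t_1\in[t+a,t+b]$ and the inductive bound, which is uniform in $t_1$, this gives $\mathcal{L}^{\F_{[a,b]}\varphi}=\mathcal{L}^{\G_{[a,b]}\varphi}=\mathcal{L}^{\varphi}$.

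It follows that $\mathcal{L}^{\psi}$ is the maximum of the Lipschitz constants of the predicate functions appearing in $\psi$ (times a norm-equivalence factor if needed). This constant is finite because $\psi$ has finitely many predicates.

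\textbf{Main obstacle.} The only delicate points are bookkeeping:
\begin{itemize}
\item The time shifts in the temporal operators must stay inside the horizon $[0,t_f]$ on which the sup norm is taken. This is why the horizon is assumed sufficiently large.
\item The induction hypothesis must be uniform in $t$, not just at $t=0$.
\item Finiteness of the sup/inf must hold so that the difference inequality is meaningful. It follows from boundedness and continuity of the signals, together with continuity of $h$.
\end{itemize}
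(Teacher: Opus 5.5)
Your proposal is correct and follows the same structural induction as the paper: a predicate base case, negation inheriting the constant, the $1$-Lipschitz property of $\min$ for conjunction, and the $\sup/\inf$ difference inequality for $\F_{[a,b]}$ and $\G_{[a,b]}$. It is more careful than the paper's proof, whose base case only bounds the difference at $t=0$; your uniform-in-$t$ hypothesis is what the temporal step actually needs, but you should state the window for a subformula $\varphi$ as $[0,t_f]$ shortened by the time horizon of $\varphi$ (or take the sup norm over all of $\R_0^+$), so that the shifted times $t_1\in[t+a,t+b]$ stay in the range where the hypothesis holds.
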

\begin{pf}
For atomic predicates, $\psi = \Big( h(x(0)) \geq 0 \Big)$, the robustness is $\rho^{\psi}(x) = h(x(0))$. If the predicate function, $h: \R^n \rightarrow \R$, is Lipschitz with constant $L_f$, then for all $x, y, : \R_0^+ \rightarrow \R^n$, we have
$$|\rho^{\psi}(x) - \rho^{\psi}(y)| = |h(x(0)) - h(y(0))| \leq L_f \|x - y\|_{\infty}.$$
Now, we proceed by structural induction on the STL formula $\psi$. For, negation, $\psi = \neg \psi$, we have $\rho^{\neg \psi}(x) = -\rho^{\psi}(x)$. This inherits the Lipschitz constant from $\psi$ as
$$|\rho^{\neg \psi}(x) - \rho^{\neg \psi}(y)| = |\rho^{\psi}(x) - \rho^{\psi}(y)|.$$
For the boolean operator conjunction $\psi = \psi_1 \land \psi_2$, robustness is defined as $\rho^{\psi}(x) = \min(\rho^{\psi_1}(x), \rho^{\psi_2}(x))$, with the $\min$ function being 1-Lipschitz continuous as 
$$|\min(a_1,a_2)-\min(b_1,b_2)| \leq \max(|a_1-b_1|,|a_2-b_2|).$$
So the overall Lipschitz constant becomes $\max(\mathcal{L}_{\psi_1}, \mathcal{L}_{\psi_2})$, where $\mathcal{L}_{\psi_1}, $ and $\mathcal{L}_{\psi_2}$ are the Lipschitz constants for $\rho^{\psi_1}(x)$ and $\rho^{\psi_2}(x)$, respectively. A similar analysis holds for disjunction via the $\max$ operator.
\\
For temporal operators, since $\min$ and $\max$ preserve Lipschitz continuity over bounded domains, and the inner functions are Lipschitz by the inductive hypothesis, the resulting robustness function $\rho^{\psi}(x)$ is also Lipschitz.
\\
Thus, by structural induction, $\rho^{\psi}(x)$ is Lipschitz continuous with constant $\mathcal{L}_\psi \in \R_0^+$, which depends on the formula structure and the Lipschitz constants of the predicates.
\end{pf}

\section{Resilience for STL Specifications}
Let the set of all trajectories of the system $\Sigma_l$ in Equation \eqref{eqn:sysdyn_lin}, starting from some $x_0 \in X$ at time $t = 0$, 
under arbitrary bounded disturbances $d: \R_0^+ \rightarrow \mathcal{B}_\infty(\varepsilon)$, be
\begin{equation}
    \Xi_\varepsilon(x_0) := \{ \xi_d(x_0) \mid \forall d: \R_0^+ \rightarrow \mathcal{B}_\infty(\varepsilon) \}.
\end{equation}

\begin{definition}[\cite{Resilience00}]
    Consider the dynamical system $\Sigma_l$ in Equation \eqref{eqn:sysdyn_lin}, an STL specification $\psi$ as in Equation \eqref{eqn:stl}, a set $X \subset \R^n$ and a point $x_0 \in X$. We define \textit{the resilience} of $\Sigma_l$ with respect to the initial condition $x(0) = x_0$ and the specification $\psi$ as a function $g_\psi: X \rightarrow \R_0^+ \cup \{+\infty\}$, with
    \begin{equation}
    \label{eq:quan}\hspace{-0.2em} g_\psi(x_0) = 
    \begin{cases}
    \sup\left\{\varepsilon\ge 0\,|\,\Xi_\varepsilon(x_0)\vDash \psi\right\}, &\Xi_0(x_0)\vDash\psi\\
    0, & \Xi_0(x_0)\nvDash\psi,
    \end{cases}
    \end{equation}
    where $\Xi_\varepsilon(x_0) \vDash \psi$ indicates
    $\xi_d(x_0) \vDash \psi, \forall d: \R_0^+ \rightarrow \mathcal{B}_\infty(\varepsilon).$
\end{definition}
For $X_0 \subset X$, $g_{\psi}(X_0) = \inf_{x_0 \in X_0}g_{\psi}(x_0)$.


Next, we show how this resilience metric can be approximated by computing bounds on system trajectories.

\section{Computation of Resilience}
This section focuses on deriving computational methods for resilience in linear dynamical systems. We begin by formulating bounds on the state trajectory of the perturbed linear system $\Sigma_l$ in Equation \eqref{eqn:sysdyn_lin}. Then, we demonstrate how these bounds enable the computation of resilience for STL specifications.

\subsection{Bounding State Trajectories}

\RD{Several approaches have been proposed in the literature to bound perturbed trajectories of dynamical systems. Classical reachability methods compute set-valued over-approximations of reachable states by propagating geometric sets, such as zonotopes or polytopes, through the system dynamics \cite{girard2005reachability}. In contrast, we aim to derive time-varying trajectory envelopes that are better suited for signal-level analysis and temporal logic evaluation. Since STL specifications are defined over entire signals, we focus on trajectory-level bounds rather than instantaneous reachable sets.} 
Other works, such as \cite{bound_m, bound_gamma}, study the existence of trajectory bounds, while \cite{Absolute_kofman} provide explicit bounds using the componentwise absolute value of the system matrices.

In this work, we discuss two methods for bounding the perturbed trajectory. The first approach leverages the Jordan canonical form decomposition of the system matrix, while the second relies on the element-wise absolute value of the system matrix to perform a worst-case analysis. By comparing these methods, we demonstrate that the Jordan-based approach yields tighter bounds than the absolute-value-based method.

In the absence of disturbances, the state trajectory follows the differential equation $\dot{x} = Ax$, with the corresponding trajectory at time $t$ given by 
$\xi_0(x_0,t) = e^{At}x_0.$
When disturbances are present, the state evolves according to Equation \eqref{eqn:sysdyn_lin}, and its solution at time $t$, $\xi_d(x_0,t)$ is 
\begin{equation}\label{eqn:alek_linear}
    \xi_d(x_0,t) = \xi_0(x_0,t) + \int_{0}^{t} e^{A(t-s)} d(s) \ds.
\end{equation}

Let $A$ be reduced to its Jordan canonical form $J$ using some invertible matrix transformation $P$ as $A = PJP^{-1}$. Then, the matrix exponential in \eqref{eqn:alek_linear} can be expressed as
$$e^{A(t-s)} = e^{PJP^{-1}(t-s)} = Pe^{J(t-s)}P^{-1}.$$

We first have the following auxiliary result.
\begin{lemma}\label{lem:Jordan}
    All elements of the matrix $e^{J_R(t-s)}$ are non-negative, i.e., $e^{J_R(t-s)} \succeq 0$, where $J_R$ is obtained by using real part of eigenvalues $J$.
\end{lemma}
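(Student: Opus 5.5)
The plan is to use the block structure of $J_R$ and write down the exponential of each block explicitly. By construction, $J_R$ is the Jordan form $J$ with every eigenvalue $\lambda_i$ on the diagonal replaced by its real part $\Rl(\lambda_i)$, while the superdiagonal ones are kept. So $J_R = \mathrm{diag}(J_{R,1},\dots,J_{R,k})$ is block diagonal with real Jordan blocks
\[
J_{R,i} = \Rl(\lambda_i)\,\Id_{m_i} + N_i ,
\]
where $N_i$ is the nilpotent shift matrix with ones on the superdiagonal. The exponential of a block-diagonal matrix is block diagonal with blocks $e^{J_{R,i}(t-s)}$, and the off-block entries are zero. It therefore suffices to show that each block exponential is entrywise non-negative.

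For a single block, $\Rl(\lambda_i)\Id_{m_i}$ commutes with $N_i$, so
\[
e^{J_{R,i}(t-s)} = e^{\Rl(\lambda_i)(t-s)}\, e^{N_i (t-s)} .
\]
Since $N_i^{m_i}=0$, the second factor is the finite sum $\sum_{k=0}^{m_i-1} \frac{(t-s)^k}{k!} N_i^k$. Its $(p,q)$ entry equals $\frac{(t-s)^{q-p}}{(q-p)!}$ for $q\ge p$ and $0$ otherwise.

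Now restrict to the range used in \eqref{eqn:alek_linear}, namely $0\le s\le t$, so that $t-s\ge 0$. Every power $(t-s)^k$ is then non-negative. The scalar prefactor $e^{\Rl(\lambda_i)(t-s)}$ is strictly positive because it is the exponential of a real number. Hence each entry of each block is a product of non-negative reals, which gives $e^{J_R(t-s)}\succeq 0$.

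The main point needing care is to make the implicit hypotheses explicit, because the claim is false without them. First, we need $t-s\ge0$: for odd $k$, the term $(t-s)^k$ would be negative if $t<s$. Second, we need $J_R$ to retain the non-negative (unit) superdiagonal structure of the complex Jordan form, rather than a real Jordan form containing rotation blocks built from imaginary parts. Replacing each $\lambda_i$ by $\Rl(\lambda_i)$ removes the oscillatory factors $e^{\mathrm{i}\,\mathrm{Im}(\lambda_i)(t-s)}$ that would otherwise spoil the sign. I would state both conditions at the start of the proof and then run the block computation above.
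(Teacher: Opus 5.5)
Your proposal is correct and matches the paper's proof step for step: the block-diagonal reduction, the commuting split $J_{R,i} = \Rl(\lambda_i)I + N$, and the finite nilpotent series whose powers $N^k$ are entrywise nonnegative. The one difference is that you state explicitly that $t-s \ge 0$ is needed (which holds on the integration range $0 \le s \le t$); the paper uses this silently when it passes from $N^k \succeq 0$ to $e^{tN} \succeq 0$.
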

\begin{pf}
    $J_R \!=\! \begin{bmatrix}
        J_{R,1} &          &    \\
            &   \ddots &    \\
            &          & J_{R,q}
    \end{bmatrix}$\!,\! 
    $J_{R,i} \!=\! \begin{bmatrix}
        \Rl(\lambda_i) & 1 & \\
            & \ddots & 1 \\
            & & \Rl(\lambda_i)
    \end{bmatrix} = \Rl(\lambda_i) I + N$,
    with $I$ as identity matrix and $N$ as nilpotent matrix of appropriate dimensions.
    We have $e^{tJ_{R,i}} = e^{t(\Rl(\lambda_i)I+N)} = e^{t\Rl(\lambda_i)I}e^{tN} = e^{t\Rl(\lambda_i)}Ie^{tN} = e^{t\Rl(\lambda_i)}e^{tN}$, where $e^{t\Rl(\lambda_i)} > 0$ as $\Rl(\lambda_i) \in \R$. 
    Now, since $N$ is nilpotent, it satisfies $N^n = 0$, for some $n \in \mathbb{N}$. Therefore,
    $$e^{tN} = I + tN + \frac{t^2N^2}{2!}  + \frac{t^3N^3}{3!} + \ldots + \frac{t^{n-1}N^{n-1}}{(n-1)!}.$$
    The matrix $N$ is a nilpotent matrix with $1$ on its superdiagonal and zeros elsewhere,
    $$\implies N^k \succeq 0, \forall k \in \mathbb{N} \implies e^{tN} \succeq 0.$$
    Therefore, $e^{tJ_{R,i}} = e^{t\Rl(\lambda_i)}e^{tN} \succeq 0$. And,
    \begin{align*}
    e^{tJ_R} &= \ e^{\begin{bmatrix}
        tJ_{R,1} &     &         \\
             & \ddots &    \\
            &     &         tJ_{R,q}
    \end{bmatrix}} = \begin{bmatrix}
        e^{tJ_{R,1}} &     &        \\
            & \ddots &    \\
            &        & e^{tJ_{R,q}}
    \end{bmatrix} \succeq 0 .
    \end{align*}
\end{pf}

Using this property, we now derive bounds on the perturbed trajectory $\xi_d(x_0,t)$.

\begin{thm}\label{thm:bd_jacob} 
    Consider the dynamical system $\Sigma_l$ in \eqref{eqn:sysdyn_lin} with a non-singular system matrix $A$. Under bounded disturbance, $d: \R_0^+ \rightarrow \mathcal{B}_\infty(\varepsilon)$, the perturbed trajectory $\xi_d(x_0,t)$ satisfies
    \begin{equation}\label{eqn:bd_jacob} 
    \xi_d(x_0,t) \in \xi_0(x_0,t) \oplus A_J(t) \mathcal{B}_{\infty}(\varepsilon),
    \end{equation} 
    where
    $A_J(t) := |P| \left( J_R^{-1}(e^{J_Rt}-\Id_n) \right) |P^{-1}|.$
\end{thm}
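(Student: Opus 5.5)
The plan is to start from the variation-of-constants formula \eqref{eqn:alek_linear} and bound the disturbance-driven term $\int_0^t e^{A(t-s)}d(s)\ds$ componentwise. Using $A = PJP^{-1}$, this term equals $\int_0^t P e^{J(t-s)} P^{-1} d(s)\ds$. Since $|d(s)| \preceq \varepsilon \mathbf{1}$ componentwise, the triangle inequality for integrals and for matrix–vector products gives
\begin{equation*}
\Big|\int_0^t e^{A(t-s)}d(s)\ds\Big| \preceq |P| \left(\int_0^t |e^{J(t-s)}| \ds\right) |P^{-1}|\, \varepsilon \mathbf{1}.
\end{equation*}
This uses only $|MN| \preceq |M||N|$ and the monotonicity of the integral for nonnegative entries.

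The key step is the elementwise comparison $|e^{J\tau}| \preceq e^{J_R\tau}$ for $\tau \ge 0$. I would prove it block by block, mirroring Lemma~\ref{lem:Jordan}. On a Jordan block, $J_i = \lambda_i I + N$ with $\lambda_i$ possibly complex, so $e^{J_i\tau} = e^{\lambda_i \tau} e^{\tau N}$. Since $e^{\tau N}$ has nonnegative real entries, $|e^{J_i\tau}| = |e^{\lambda_i\tau}|\, e^{\tau N} = e^{\Rl(\lambda_i)\tau} e^{\tau N} = e^{J_{R,i}\tau}$. The block-diagonal structure then gives the claim for the whole matrix, and Lemma~\ref{lem:Jordan} guarantees that the right-hand side is nonnegative, so replacing $|e^{J(t-s)}|$ by $e^{J_R(t-s)}$ preserves the inequality.

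Next I would evaluate the integral. With the substitution $\tau = t-s$, we get $\int_0^t e^{J_R\tau} d\tau = J_R^{-1}(e^{J_R t} - \Id_n)$, because $\frac{d}{d\tau}\big(J_R^{-1}e^{J_R\tau}\big) = e^{J_R\tau}$ and $J_R$ commutes with its exponential. Combining the pieces yields $|\xi_d(x_0,t) - \xi_0(x_0,t)| \preceq A_J(t)\,\varepsilon\mathbf{1}$. I would read the set $A_J(t)\mathcal{B}_\infty(\varepsilon)$ in \eqref{eqn:bd_jacob} as this componentwise envelope, namely the hyperrectangle with half-widths $A_J(t)\varepsilon$. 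Since $A_J(t) \succeq 0$, this is exactly the interpretation that the bound delivers.

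I expect the main obstacle to be the invertibility of $J_R$. Non-singularity of $A$ only excludes $\lambda_i = 0$, whereas $J_R$ is singular whenever some eigenvalue is purely imaginary. I would therefore state the formula with $J_R^{-1}(e^{J_R t} - \Id_n)$ understood as the integral $\int_0^t e^{J_R\tau}d\tau$, equivalently the entire-function series $\sum_{k\ge0} J_R^k t^{k+1}/(k+1)!$, which reduces to the closed form whenever all $\Rl(\lambda_i) \neq 0$. A second, minor point to check is that complex-conjugate eigenvalue pairs use a complex $P$; the argument with $|P|$ and $|P^{-1}|$ as entrywise moduli goes through unchanged, since all inequalities are taken on moduli and the final bound is real.
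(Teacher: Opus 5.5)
Your proposal is correct and follows the paper's own argument: variation of constants with $A=PJP^{-1}$, entrywise moduli with $|d(s)|\preceq\varepsilon$, replacing $|e^{J(t-s)}|$ by $e^{J_R(t-s)}$, and the closed-form integral $J_R^{-1}(e^{J_Rt}-\Id_n)$. Your extra care is sound and fills gaps the paper leaves open: the identity $|e^{J_i\tau}|=e^{\Rl(\lambda_i)\tau}e^{\tau N}$ is needed because Lemma~\ref{lem:Jordan} alone only gives nonnegativity; the paper's proof silently strengthens the hypothesis to ``$A$ Hurwitz'' in order to invert $J_R$; and your box reading of $A_J(t)\mathcal{B}_\infty(\varepsilon)$ is exactly what the paper's final implication uses (the literal linear image of the cube can be strictly smaller than the true reachable set, e.g.\ $A_J(t)$ is rank one for any non-singular $2\times 2$ $A$ with eigenvectors $(1,\pm 1)$).
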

\begin{pf}
Substituting the Jordan decomposition into Equation~\eqref{eqn:alek_linear} gives
$$\xi_d(x_0,t) = \xi_0(x_0,t) + \int_0^t Pe^{J(t-s)}P^{-1} d(s)\,ds.$$
Taking elementwise absolute values and applying Lemma~\ref{lem:Jordan}, 
$$\bigl|\xi_d(x_0,t)-\xi_0(x_0,t)\bigr| \preceq \int_0^t |P|\,e^{J_R(t-s)}\,|P^{-1}|\,|d(s)|\,ds.$$
Using the disturbance bound $|d(s)|\preceq\varepsilon$ gives
$$\bigl|\xi_d(x_0,t)-\xi_0(x_0,t)\bigr| \preceq \int_0^t |P|\,e^{J_R(t-s)}\,|P^{-1}|\,\varepsilon\,ds.$$
Since $A$ is Hurwitz, $J_R$ is invertible, and the closed-form integral solution yields
$\int_0^t e^{J_R(t-s)}ds = J_R^{-1}(e^{J_Rt}-\Id_n).$
Substituting this into the bound above gives
$$\bigl|\xi_d(x_0,t)-\xi_0(x_0,t)\bigr|
\preceq |P|\,(J_R^{-1}(e^{J_Rt}-\Id_n))\,|P^{-1}|\,\varepsilon$$
Therefore,
\begin{align*}
    &\xi_0(x_0,t)-A_J(t)\varepsilon \preceq \xi_d(x_0,t) \preceq \xi_0(x_0,t)+A_J(t)\varepsilon
    \implies \xi_d(x_0,t)\in \xi_0(x_0,t)\oplus A_J(t)\,\mathcal{B}_\infty(\varepsilon). 
\end{align*}\qed
\end{pf}

For disturbance bounded by $d(t) \in \mathcal{B}_\infty(\varepsilon)$ with $\varepsilon > 0$, let this trajectory bound be equivalently expressed as a signal through $\overline{{\xi}}_\varepsilon (x_0) : \R_0^+ \rightarrow \mathbb{R}^n$, i.e., $\overline{{\xi}}_\varepsilon (x_0) := \xi_0(x_0) + A_J\varepsilon,$ 
\begin{align}\label{eqn:bd}
    \overline{\xi}_{-\varepsilon}(x_0,t) \le \xi_d(x_0,t) \le \overline{\xi}_{\varepsilon}(x_0,t), \quad \forall\, t \in \R_0^+.
\end{align}
\begin{lemma}\label{lem:rho_lip}
     {Consider an STL specification $\psi$ over the time interval $[0,t_f]$ as defined in \eqref{eqn:stl}, where the predicates are Lipschitz continuous.
     Let $d(t) \in \mathcal{B}_\infty(\omega)$ and the corresponding perturbed signal bound be given by $\overline{{\xi}}_\omega (x_0) := \xi_0(x_0) + A_J\omega.$ Then, $\rho^\psi(\overline{{\xi}}_\omega (x_0))$ is Lipschitz continuous w.r.t. $\omega$, with Lipschitz constant 
     $$\mathcal{L}_\omega := \mathcal{L}_\psi \max_{t \in [0,t_f]} \|A_J(t)\|,$$
     where $\mathcal{L}_\psi$ is the Lipschitz constant of the robustness measure from Theorem~\ref{thm:rho_lip}. This implies that, 
     $$|\rho^\psi(\overline{{\xi}}_{\omega_1} (x_0)) - \rho^\psi(\overline{{\xi}}_{\omega_2} (x_0))| \leq \mathcal{L}_\omega \|\omega_1 - \omega_2\|, \forall \omega_1, \omega_2>0.$$}
\end{lemma}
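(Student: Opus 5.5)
The plan is to compose the Lipschitz bound of Theorem~\ref{thm:rho_lip} with a Lipschitz estimate for the map $\omega \mapsto \overline{\xi}_\omega(x_0)$, viewed as a map into bounded continuous signals on $[0,t_f]$ with the sup norm. Both $\overline{\xi}_{\omega_1}(x_0)$ and $\overline{\xi}_{\omega_2}(x_0)$ are continuous and bounded on the compact horizon $[0,t_f]$: $\xi_0(x_0,t)=e^{At}x_0$ is continuous, and $A_J(t)=|P|\,J_R^{-1}(e^{J_Rt}-\Id_n)\,|P^{-1}|$ is continuous in $t$. Hence Theorem~\ref{thm:rho_lip} applies to them and gives
\begin{equation*}
|\rho^\psi(\overline{\xi}_{\omega_1}(x_0))-\rho^\psi(\overline{\xi}_{\omega_2}(x_0))| \le \mathcal{L}_\psi \,\|\overline{\xi}_{\omega_1}(x_0)-\overline{\xi}_{\omega_2}(x_0)\|_\infty .
\end{equation*}

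Next, we compute the signal difference. The nominal part $\xi_0(x_0)$ cancels, and linearity in $\omega$ leaves
\begin{equation*}
\overline{\xi}_{\omega_1}(x_0,t)-\overline{\xi}_{\omega_2}(x_0,t)=A_J(t)(\omega_1-\omega_2).
\end{equation*}
Here $\omega$ is interpreted either as a scalar radius, via the vector $\omega\mathbf{1}$, or as a vector of componentwise radii; the estimate below works in both cases. The sup norm of the signal is $\sup_{t\in[0,t_f]}\|A_J(t)(\omega_1-\omega_2)\|$. With the 2-norm in space, the induced operator norm gives $\|A_J(t)(\omega_1-\omega_2)\|\le \|A_J(t)\|\,\|\omega_1-\omega_2\|$. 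Taking the supremum over $t$, which is attained as a maximum by continuity of $A_J$ on the compact interval, yields
\begin{equation*}
\|\overline{\xi}_{\omega_1}(x_0)-\overline{\xi}_{\omega_2}(x_0)\|_\infty\le \max_{t\in[0,t_f]}\|A_J(t)\|\,\|\omega_1-\omega_2\|.
\end{equation*}
Substituting this into the first display gives the claim with $\mathcal{L}_\omega=\mathcal{L}_\psi\max_{t}\|A_J(t)\|$.

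The main obstacle is consistency of norms. The notation section defines $\|\cdot\|_\infty$ as a sup norm, and Theorem~\ref{thm:rho_lip} uses it on signals. I will read the signal norm as $\sup_t\|x(t)\|$ with the spatial norm matching the one under which the predicate Lipschitz constants were taken. If those constants are instead with respect to the spatial $\infty$-norm, I would use $\|v\|_\infty\le\|v\|$ to pass to the 2-norm, which preserves the inequality direction. For a scalar $\omega$, the factor from $\|\mathbf{1}\|$ can be absorbed into $\|A_J(t)\|$ by reading it as $\|A_J(t)\mathbf{1}\|\le\sqrt{n}\|A_J(t)\|$; I would state this convention explicitly.

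A secondary point is the continuity of $A_J$, needed for the maximum to exist. It follows from the closed form and the invertibility of $J_R$ assumed in Theorem~\ref{thm:bd_jacob}.
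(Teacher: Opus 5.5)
Your proposal is correct and follows the paper's own proof. Like the paper, you apply Theorem~\ref{thm:rho_lip} to the two bound signals, cancel $\xi_0(x_0)$, and bound $\sup_t\|A_J(t)(\omega_1-\omega_2)\|$ by $\max_{t\in[0,t_f]}\|A_J(t)\|\,\|\omega_1-\omega_2\|$. Your norm bookkeeping is more careful than the paper's, which glosses over this point. You rightly flag that, for a scalar radius read as $\omega\mathbf{1}$, the stated constant may need a factor $\sqrt{n}$, or else $\|A_J(t)\|$ must be read as the induced $\infty$-norm. For vector-valued $\omega$, as used in Theorem~\ref{thm:SOP}, the stated constant is exactly right.
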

\begin{pf}
    {For an STL specification $\psi$, the Lipschitz constant $\mathcal{L}_\psi$ for $\rho^\psi$ can be obtained from Theorem~\ref{thm:rho_lip}. Accordingly, for disturbance bounds $\omega_1, \omega_2 >0$, we can write,
    \begin{align*}
    |\rho^\psi&(\overline{{\xi}}_{\omega_1}(x_0)) - \rho^\psi(\overline{{\xi}}_{\omega_2}(x_0))| 
        \leq \mathcal{L}_\psi \|\overline{{\xi}}_{\omega_1}(x_0) - \overline{{\xi}}_{\omega_2}(x_0)\|_{\infty} \\
        \leq& \mathcal{L}_\psi \| \xi_0(x_0) + A_J\omega_1 - \xi_0(x_0) - A_J\omega_2 \|_{\infty} 
        \leq \mathcal{L}_\psi \max_{t \in [0,t_f]} \|A_J(t)\| \| \omega_1 - \omega_2 \|
        \leq \mathcal{L}_\omega \|\omega_1 - \omega_2\|.
    \end{align*}\qed
    }
\end{pf}
We can also establish the bounds on the disturbed trajectory in Equation \eqref{eqn:alek_linear} using the element-wise absolute value of the system matrix 
\cite{Absolute_kofman}.

\begin{thm}\label{thm:bd_abs}
    Under bounded disturbance, $d: \R_0^+ \rightarrow \mathcal{B}_\infty(\varepsilon)$, the perturbed trajectory $\xi_d(x_0,t)$ satisfies
    \begin{align}\label{eqn:bd_abs}
    \xi_d(x_0,t) \in \xi_0(x_0,t) \oplus A_N(t)\mathcal{B}_{\infty}(\varepsilon),
    \end{align}
    where 
    $A_N(t) := \tilde{A}^{-1}(e^{\tilde{A}t}-1), \ \tilde{A} := |P||J||P^{-1}|.$
\end{thm}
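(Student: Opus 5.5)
The plan is to follow the structure of the proof of Theorem~\ref{thm:bd_jacob}. Instead of the real Jordan part $J_R$, I will dominate the full matrix exponential elementwise by $e^{\tilde{A}t}$. Starting from the variation-of-constants formula \eqref{eqn:alek_linear}, I get
$$\bigl|\xi_d(x_0,t)-\xi_0(x_0,t)\bigr| \preceq \int_0^t \bigl|e^{A(t-s)}\bigr|\,|d(s)|\,ds \preceq \int_0^t \bigl|e^{A(t-s)}\bigr|\,\varepsilon\,ds .$$
This uses only the triangle inequality for integrals applied componentwise, together with $|d(s)|\preceq \varepsilon$. It therefore suffices to show $|e^{A\tau}| \preceq e^{\tilde{A}\tau}$ for all $\tau\ge 0$ and then to integrate.

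For the key domination step, I expand $e^{A\tau} = P e^{J\tau}P^{-1} = \sum_{k\ge 0} \frac{\tau^k}{k!} P J^k P^{-1}$. The elementwise absolute value is subadditive and submultiplicative: $|M_1+M_2|\preceq |M_1|+|M_2|$ and $|M_1M_2|\preceq |M_1||M_2|$. Hence $|PJ^kP^{-1}| \preceq |P|\,|J|^k\,|P^{-1}|$. Next I insert identities: since $I = |P^{-1}P| \preceq |P^{-1}|\,|P|$ and all factors are nonnegative matrices, I can write
$$|P|\,|J|^k\,|P^{-1}| = |P|\,|J|\,I\,|J|\cdots I\,|J|\,|P^{-1}| \preceq \bigl(|P|\,|J|\,|P^{-1}|\bigr)^k = \tilde{A}^k .$$
Here monotonicity of products of nonnegative matrices under $\preceq$ justifies the replacement. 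Summing the series, with $\tau\ge0$ so that all coefficients are nonnegative, gives $|e^{A\tau}| \preceq \sum_k \frac{\tau^k}{k!}\tilde{A}^k = e^{\tilde{A}\tau}$. The series converge absolutely, so the termwise comparison passes to the limit.

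Finally I integrate: $\int_0^t e^{\tilde{A}(t-s)}ds = \sum_{k\ge0}\frac{t^{k+1}}{(k+1)!}\tilde{A}^k$. When $\tilde{A}$ is invertible, this equals $\tilde{A}^{-1}(e^{\tilde{A}t}-\Id_n)$, which is $A_N(t)$, reading the ``$1$'' in the statement as the identity. Then $|\xi_d-\xi_0|\preceq A_N(t)\varepsilon$, which is exactly the two-sided bound $\xi_0 - A_N\varepsilon \preceq \xi_d \preceq \xi_0 + A_N\varepsilon$, i.e.\ \eqref{eqn:bd_abs}.

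The main obstacle I expect is the invertibility of $\tilde{A}$. Nonsingularity of $A$ makes $J$ invertible, but it does not obviously make $|P||J||P^{-1}|$ invertible. I would handle this by defining $A_N(t)$ through the convergent series above, which always makes sense and is nonnegative, and noting that it coincides with the closed form whenever $\tilde{A}^{-1}$ exists. The rest, namely the insertion of $|P^{-1}||P|\succeq I$ and the monotonicity of nonnegative matrix products, is routine.
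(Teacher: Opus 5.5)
Your argument has the same skeleton as the paper's proof: a componentwise triangle inequality on the variation-of-constants formula, termwise domination of the exponential series, then integration. The dominating matrix differs, though. The paper uses $|A^k| \preceq |A|^k$ to get $|e^{A\tau}| \preceq e^{|A|\tau}$, and then defines $A_N(t) := \int_0^t e^{|A|(t-s)}\,ds$. It never relates $|A|$ to $\tilde{A} = |P||J||P^{-1}|$, so as written it proves the sharper bound with $|A|$ in place of $\tilde{A}$. The stated version then needs one more observation: $|A| = |PJP^{-1}| \preceq \tilde{A}$, hence $|A|^k \preceq \tilde{A}^k$ and $e^{|A|\tau} \preceq e^{\tilde{A}\tau}$. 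Your route through $PJ^kP^{-1}$ lands on $\tilde{A}$ directly, so it matches the statement. The paper's route is shorter, avoids a possibly complex Jordan basis, and gives a tighter envelope.

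There is one slip to repair: your insertion argument fails at $k=0$. There it would require $|P|\,|J|^0\,|P^{-1}| = |P|\,|P^{-1}| \preceq I$. In fact $|P|\,|P^{-1}| \succeq |PP^{-1}| = I$, with equality only when $P$ is a scaled permutation matrix. Inserting $I \preceq |P^{-1}||P|$ is legitimate only for $k \ge 1$, where there are $k-1$ gaps to fill. The $k=0$ term must be handled directly: $|PJ^0P^{-1}| = I = \tilde{A}^0$. The cleanest fix is to use $|A| \preceq \tilde{A}$ once, which gives $|A^k| \preceq |A|^k \preceq \tilde{A}^k$ for all $k \ge 0$.

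Your concern about invertibility is well founded. Taking $P = \begin{pmatrix}1&1\\1&-1\end{pmatrix}$ and $J = \mathrm{diag}(-1,-2)$ gives a Hurwitz, nonsingular $A$ but a rank-one $\tilde{A}$. The paper sidesteps this exactly as you propose, by defining $A_N(t)$ as the integral rather than through $\tilde{A}^{-1}$.
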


{
\begin{pf}
Starting from Equation~\eqref{eqn:alek_linear}, take elementwise absolute values and applying the triangle inequality:
$$\bigl|\xi_d(x_0,t)-\xi_0(x_0,t)\bigr| \preceq \int_0^t \bigl|e^{A(t-s)}\bigr|\,|d(s)|\,ds.$$
Next, we bound the matrix exponential. Using the Taylor series expansion 
$e^{A\tau} = \sum_{k=0}^\infty A^k\tau^k/k!$
and the elementwise subadditivity and submultiplicativity of the absolute value ($|A^k|\preceq|A|^k$), we obtain 
$$\bigl|e^{A\tau}\bigr| =\Biggl|\sum_{k=0}^\infty \frac{A^k\tau^k}{k!}\Biggr| 
\preceq \sum_{k=0}^\infty \frac{|A|^k\tau^k}{k!} = e^{|A|\tau}.$$
Since $e^{|A|\tau}$ is a power series of the nonnegative matrix $|A|$, it has nonnegative entries. Substituting this bound gives
$$ \bigl|\xi_d(x_0,t)-\xi_0(x_0,t)\bigr| \preceq \int_0^t e^{|A|(t-s)}\,|d(s)|\,ds.$$
Using the disturbance bound $|d(s)|\preceq\varepsilon$ gives
\begin{align*}
    &\bigl|\xi_d(x_0,t)-\xi_0(x_0,t)\bigr|\preceq \int_0^t e^{|A|(t-s)}\,\varepsilon\,ds =:A_N(t)\,\varepsilon, \\
    \implies &\xi_0(x_0,t)-A_N(t)\varepsilon \preceq \xi_d(x_0,t) \preceq \xi_0(x_0,t)+A_N(t)\varepsilon, 
    \implies \xi_d(x_0,t)\in \xi_0(x_0,t)\oplus A_N(t)\,\mathcal{B}_\infty(\varepsilon).    
\end{align*}
\end{pf}
}



The bound in Theorem \ref{thm:bd_abs}, derived using the element-wise absolute value of the system matrix, is more conservative than the one in Theorem \ref{thm:bd_jacob}. This follows from the fact that $|P|J_R|P^{-1}| \preceq |P||J||P^{-1}|$, implying that the bounds in \eqref{eqn:bd_jacob} are always at least as tight as those in \eqref{eqn:bd_abs}, and often much tighter. The extent of this gap is illustrated in Figure~\ref{fig:comp_bound}. 
These bounds now form the basis for quantifying the system’s resilience to disturbances.

\begin{figure}
    \centering
    \includegraphics[width=0.8\textwidth]{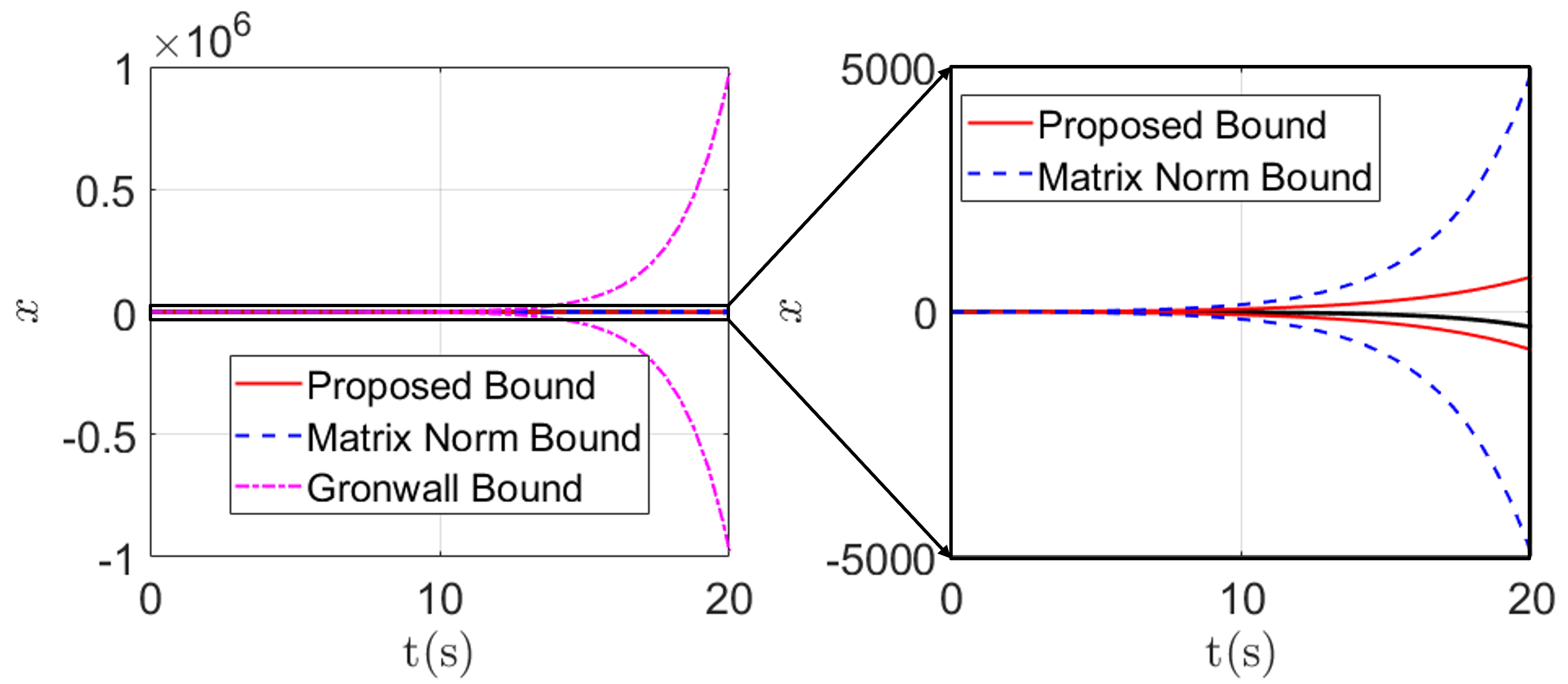}
    \caption{\RD{Comparison of bounds obtained using the proposed approach, element-wise absolute-value bounds, and Grönwall-based estimates.}}
    \label{fig:comp_bound}
\end{figure}

\subsection{Resilience for STL specifications}
We now use previously derived bounds to systematically compute a lower bound on resilience, establishing the maximum disturbance under which the system is guaranteed to maintain STL satisfaction.

\RD{
    From \eqref{eqn:bd}, the trajectories $\overline{\xi}_{-\varepsilon}(x_0)$ and $\overline{\xi}_{\varepsilon}(x_0)$ bound the system response $\xi_d(x_0)$ under any disturbance 
    $d:\mathbb{R}_0^+ \!\to\! \mathcal{B}_\infty(\varepsilon)$. 
    Since an intermediate trajectory within this bounded region may still violate the specification $\psi$, we evaluate the robustness $\rho^{\psi}$ over the entire family of trajectories
    $\{\overline{\xi}_{\omega}(x_0)\}_{\omega \in \mathcal{B}_\infty(\varepsilon)}$. 
    Ensuring nonnegative robustness across this family guarantees satisfaction of $\psi$ for all admissible trajectories $\xi_d(x_0)$, such that
    $
    \xi_d(x_0,t) \in \cup_{\omega \in \mathcal{B}_\infty(\varepsilon)}      \overline{\xi}_{\omega}(x_0,t),$ for all $t \in [0,t_f].
    $
    Therefore,
\begin{align}\label{eqn:impli}
        \min_{\omega \in [-\varepsilon,\,\varepsilon]} 
        \rho^{\psi}\big(\overline{\xi}_{\omega}(x_0)\big) > 0
        \;\implies\;
        \xi_d(x_0) \vDash \psi.
    \end{align}
}
The next theorem presents the main result of this paper.
\begin{thm}\label{thm:res_lin}
    Consider the system in \eqref{eqn:sysdyn_lin}, and an STL specification $\psi$ over the time interval $[0,t_f]$ as in \eqref{eqn:stl}. With the trajectory bounds $\overline{{\xi}}_{\omega} (x_0)$ derived in \eqref{eqn:bd_jacob} and the corresponding robustness of this signal bound, $\rho^\psi (\overline{{\xi}}_{\omega} (x_0))$, a lower bound on the resilience of system \eqref{eqn:sysdyn_lin} for the initial state $x_0\in X$ and specification $\psi$ is derived as
    \begin{gather}\label{eqn:res_lin}
        g_{\psi} (x_0) \geq \max \{ \varepsilon \geq 0 | \min_{\omega \in \mathcal{B}_{\infty}(\varepsilon)} \rho^\psi( \overline{{\xi}}_{\omega} (x_0) ) > 0\}.
    \end{gather}
\end{thm}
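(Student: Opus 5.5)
Let $\varepsilon^\ast$ denote the right-hand side of \eqref{eqn:res_lin}. The plan is to show that every $\varepsilon$ in the set defining $\varepsilon^\ast$ satisfies $\Xi_\varepsilon(x_0)\vDash\psi$. It then follows that this set is contained in $\{\varepsilon\ge 0 \mid \Xi_\varepsilon(x_0)\vDash\psi\}$, and taking suprema gives $g_\psi(x_0)\ge\varepsilon^\ast$. If the set is empty, we interpret the maximum as $0$ and the bound is trivial because $g_\psi\ge 0$.

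\textbf{Step 1 (case split on the nominal trajectory).} Take $\varepsilon\ge 0$ with $\min_{\omega\in\mathcal{B}_\infty(\varepsilon)}\rho^\psi(\overline{\xi}_\omega(x_0))>0$. Choosing $\omega=0$ gives $\rho^\psi(\xi_0(x_0))>0$, so $\Xi_0(x_0)\vDash\psi$. This puts us in the first branch of the definition \eqref{eq:quan}, where $g_\psi(x_0)$ is the supremum. The second branch can only occur when the defining set of $\varepsilon^\ast$ is empty, and then the inequality holds trivially.

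\textbf{Step 2 (containment of perturbed trajectories).} Fix an admissible $d:\R_0^+\to\mathcal{B}_\infty(\varepsilon)$. Theorem~\ref{thm:bd_jacob} gives $|\xi_d(x_0,t)-\xi_0(x_0,t)|\preceq A_J(t)\varepsilon$ componentwise for every $t\in[0,t_f]$. Since $A_J(t)\succeq 0$ by Lemma~\ref{lem:Jordan}, each coordinate of $\xi_d(x_0,t)$ lies between the corresponding coordinates of $\overline{\xi}_{-\varepsilon}(x_0,t)$ and $\overline{\xi}_{\varepsilon}(x_0,t)$. Hence, as in \eqref{eqn:bd}, $\xi_d(x_0,t)\in\bigcup_{\omega\in\mathcal{B}_\infty(\varepsilon)}\overline{\xi}_\omega(x_0,t)$ pointwise in time. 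Concretely, for each $t$ we pick $\omega(t)\in\mathcal{B}_\infty(\varepsilon)$ with $\xi_d(x_0,t)=\xi_0(x_0,t)+A_J(t)\omega(t)$, which is possible by the intermediate value theorem applied coordinatewise.

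\textbf{Step 3 (transfer of positive robustness).} We then invoke implication \eqref{eqn:impli}: positivity of the robustness minimum over the family $\{\overline{\xi}_\omega(x_0)\}$ yields $\xi_d(x_0)\vDash\psi$. Since $d$ was arbitrary, $\Xi_\varepsilon(x_0)\vDash\psi$, and the containment of sets from the first paragraph finishes the proof. The maximum in \eqref{eqn:res_lin} is well defined as a supremum. Lemma~\ref{lem:rho_lip} shows that $\varepsilon\mapsto\min_{\omega}\rho^\psi(\overline{\xi}_\omega(x_0))$ is continuous, so the constraint set is an interval starting at $0$, although that is not needed for the inequality itself.

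\textbf{Main obstacle.} The delicate point is the justification of \eqref{eqn:impli} itself. The selector $\omega(t)$ from Step 2 varies with $t$, so $\xi_d(x_0)$ need not coincide with any single member $\overline{\xi}_\omega(x_0)$ of the family. Positivity of robustness for each fixed $\omega$ does not directly transfer to a time-varying mixture, for example for $\F$-type operators. If relying on \eqref{eqn:impli} as stated is judged insufficient, the fallback is to restrict to the predicate level. There we would assume, or argue via convexity or monotonicity of the predicate functions $h$ on the box $\xi_0(x_0,t)\oplus A_J(t)\mathcal{B}_\infty(\varepsilon)$, that $h$ is positive on the whole box whenever it is positive on the family. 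We would then propagate this pointwise bound through the $\min/\max$ recursion \eqref{eqn:stl_rho} by structural induction, as in Theorem~\ref{thm:rho_lip}.
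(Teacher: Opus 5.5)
Your Steps 1--3 follow the paper's own proof, which also rests entirely on implication \eqref{eqn:impli}. The obstacle you flag is a genuine gap, and it cannot be closed: \eqref{eqn:impli}, and with it \eqref{eqn:res_lin}, fails in general. Take the family indexed by constant $\omega$, as in the scenario program, where the $\omega_r$ are points of $\mathcal{B}_\infty(\varepsilon)$. Then $\min_\omega$ does not commute with the maxima that $\F$ or a disjunction $\neg(\neg\psi_1\wedge\neg\psi_2)$ introduce into \eqref{eqn:stl_rho}.

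Concretely, take $n=1$, $A=-1$, $x_0=0$, so that $\overline{\xi}_\omega(x_0,t)=\omega(1-e^{-t})$, and let $\psi=\neg\bigl(\neg\G_{[1,1]}(x+0.1\ge 0)\wedge\neg\G_{[10,10]}(0.1-x\ge 0)\bigr)$. Then $\rho^\psi(\overline{\xi}_\omega(x_0))=\max\bigl(0.1+\omega(1-e^{-1}),\,0.1-\omega(1-e^{-10})\bigr)\ge 0.1$ for every $\omega\in\R$. So every $\varepsilon\ge 0$ lies in the set on the right of \eqref{eqn:res_lin}; conjoin $\G_{[0,10]}(5-x\ge 0)\wedge\G_{[0,10]}(5+x\ge 0)$ if you want its supremum to be finite, about $5$. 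Yet for $\varepsilon=1$, the continuous disturbance equal to $-1$ on $[0,1]$, $+1$ on $[1.1,10]$, and linear in between gives $x(1)\approx-0.63$ and $x(10)>0.99$. This violates both disjuncts, so $g_\psi(0)\le 1$. Each fixed $\omega$ satisfies one of the disjuncts, while the switching disturbance (your time-varying selector $\omega(t)$) violates both.

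Step 2 has a second, independent error. The bound $|\xi_d(x_0,t)-\xi_0(x_0,t)|\preceq A_J(t)\varepsilon$ only places $\xi_d(x_0,t)$ in the box $[\overline{\xi}_{-\varepsilon}(x_0,t),\overline{\xi}_{\varepsilon}(x_0,t)]$. The coordinatewise intermediate value theorem gives a separate scalar for each coordinate, not one vector $\omega(t)$ with $A_J(t)\omega(t)=\xi_d(x_0,t)-\xi_0(x_0,t)$. When $A_J(t)$ has off-diagonal entries, $A_J(t)\mathcal{B}_\infty(\varepsilon)$ is a parallelotope strictly inside that box, and actual trajectories leave it. For example, take $A=\begin{bmatrix}-1&-1\\0&-2\end{bmatrix}$ with its eigenvector matrix $P=\begin{bmatrix}1&1\\0&1\end{bmatrix}$, $x_0=0$ and $d\equiv\varepsilon(-1,1)^\top$. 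One gets $\xi_d(0,t)=\varepsilon(m_2-2m_1,\,m_2)^\top$ with $m_i=(1-e^{-it})/i$, which would require $\omega=\varepsilon(-3,1)^\top$. By homogeneity in $\varepsilon$, a linear predicate separating this point from the parallelotope, placed under $\G_{[t,t]}$, breaks \eqref{eqn:res_lin} even without $\F$, and even if $\omega$ may vary in time. The proof of Theorem~\ref{thm:bd_jacob} makes the same box-to-parallelotope step, so citing its conclusion does not help.

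What your fallback can actually deliver is a different statement. Carry lower and upper bounds of each $h$ over the box $[\overline{\xi}_{-\varepsilon}(x_0,t),\overline{\xi}_{\varepsilon}(x_0,t)]$ through \eqref{eqn:stl_rho}, swapping them under negation. Positivity of the resulting lower robustness does imply $\rho^\psi(\xi_d(x_0))>0$ for every admissible $d$, and so gives a valid, more conservative lower bound on $g_\psi(x_0)$. But it does not follow from the hypothesis $\min_{\omega}\rho^\psi(\overline{\xi}_\omega(x_0))>0$. As the first example shows, that hypothesis need not make any single predicate positive on the family.
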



\begin{pf}
The robustness of the signal $\xi_d$ with respect to the STL specification $\psi$ is given by $\rho^\psi(\xi_d)$. To ensure satisfaction of $\psi$, the robustness $\rho^\psi(\xi_d)$ must be non-negative, i.e., $\rho^\psi(\xi_d) \geq 0$, for all $d: \R_0^+ \rightarrow \mathcal{B}_\infty(\varepsilon)$.

Using the trajectory bounds from \eqref{eqn:bd_jacob}, the worst-case robustness of the disturbed trajectory $\xi_d(x_0)$ over the disturbance set $\mathcal{B}_\infty(\varepsilon)$ is
$\min_{\omega \in \mathcal{B}_{\infty}(\varepsilon)} \rho^\psi\left( \overline{{\xi}}_{\omega} (x_0) \right).$
If this minimum robustness is non-negative, then the disturbed trajectory $\xi_d(x_0)$ satisfies $\psi$ for all $d: \R_0^+ \rightarrow \mathcal{B}_\infty(\varepsilon)$:
$$\xi_d(x_0) \vDash \psi, \forall d: \R_0^+ \rightarrow \mathcal{B}_\infty(\varepsilon).$$

This condition directly corresponds to the definition of the resilience metric \eqref{eq:quan}. Therefore, the resilience metric $g_{\psi}(x_0)$ for STL specification $\psi$ is given by
\begin{align*}
    g_{\psi} (x_0) = &\max\left\{\varepsilon \ge 0\,|\,\Xi_\varepsilon(x_0)\vDash \psi\right\} 
    =\max \{\varepsilon \ge 0\,|\, \xi_d(x_0) \vDash \psi, \forall d: \R_0^+ \rightarrow \mathcal{B}_\infty(\varepsilon) \}
\end{align*}
Consequently, from the implication \eqref{eqn:impli},
\begin{align*}
    g_{\psi} (x_0) \geq &\max \{ \varepsilon \geq 0 | \min_{\omega \in \mathcal{B}_{\infty}(\varepsilon)} \rho^\psi( \overline{{\xi}}_{\omega} (x_0) ) > 0 \}.
\end{align*}
Thus, the computed resilience metric represents a lower bound on the maximum disturbance level $\varepsilon$ that ensures STL satisfaction under worst-case disturbances. \qed
\end{pf}
{
\begin{remark}
    The resilience expression in \eqref{eqn:res_lin} computes the minimum robustness over all disturbance signals $\omega \in \mathcal{B}_\infty(\varepsilon)$. This allows us to determine the largest disturbance envelope $\mathcal{B}_\infty(\varepsilon)$ such that the entire set of trajectories within it satisfies the specification. Importantly, due to the time-varying nature of the envelope bounds (as shown in Figure~\ref{fig:comp_bound}), the robustness is evaluated over entire trajectories, thereby preserving the temporal semantics of STL.
\end{remark}
}

Since the computation requires exploring over the continuous disturbance space $\mathcal{B}_{\infty}(\varepsilon)$, the problem inherently involves an infinite number of constraints. To address this, we sample $M$ data points, $\left\{\omega_r | r \in \{1,\ldots,M\} \right\},$ from the disturbance space $\mathcal{B}_{\infty}(\varepsilon)$. Around each sample $\omega_r$, we define $\delta$-balls $\Omega_r$, such that for all points in the disturbance space $\omega \in \mathcal{B}_{\infty}(\varepsilon)$, there exists a point $\omega_r$ which satisfies $\|\omega-\omega_r\| \leq \delta$. This ensures that the union of all these balls forms a superset of the disturbance space, i.e., $\bigcup_{r=1}^{M}\Omega_r \supset \mathcal{B}_{\infty}(\varepsilon)$.

Now, consider the scenario problem:
\begin{align}\label{eqn:SOP}
    &\max_{\varepsilon} \min_{\eta} \quad \eta \\
    &\textrm{s.t. } \forall \omega_r \in \mathcal{B}_{\infty}(\varepsilon), \ -\rho^\psi( \overline{\xi}_{\omega_r} (x_0)) \leq \eta \notag. 
\end{align}

\begin{thm}\label{thm:SOP}
    Let $(\varepsilon^*, \eta^*)$ be the solution to the scenario problem. Then, for all $\omega \in \mathcal{B}_{\infty}(\varepsilon^*)$, the condition $\rho^\psi( \overline{\xi}_{\omega} (x_0)) > 0 $ is guaranteed if
    \begin{equation}
        \eta^* + \mathcal{L}_\omega\delta \leq 0, 
    \end{equation}
    where $\mathcal{L}_\omega$ denotes the Lipschitz constant of $\rho^\psi( \overline{\xi}_{\omega} (x_0))$, as established in Lemma \ref{lem:rho_lip}.
\end{thm}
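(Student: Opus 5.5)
The plan is a standard covering argument, combining feasibility of the scenario constraints with the Lipschitz bound of Lemma~\ref{lem:rho_lip}. Fix the optimal pair $(\varepsilon^*,\eta^*)$ of the scenario problem \eqref{eqn:SOP}. By feasibility, every sampled point $\omega_r\in\mathcal{B}_\infty(\varepsilon^*)$ satisfies $-\rho^\psi(\overline{\xi}_{\omega_r}(x_0))\le \eta^*$, that is,
\begin{equation*}
\rho^\psi(\overline{\xi}_{\omega_r}(x_0)) \ge -\eta^*, \quad r\in\{1,\ldots,M\}.
\end{equation*}
The goal is to pass from the finitely many samples to an arbitrary $\omega\in\mathcal{B}_\infty(\varepsilon^*)$.

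Take an arbitrary $\omega\in\mathcal{B}_\infty(\varepsilon^*)$. By the covering property $\bigcup_{r}\Omega_r\supset\mathcal{B}_\infty(\varepsilon^*)$, there is an index $r$ with $\|\omega-\omega_r\|\le\delta$. Lemma~\ref{lem:rho_lip} then gives
\begin{equation*}
\rho^\psi(\overline{\xi}_{\omega}(x_0)) \ge \rho^\psi(\overline{\xi}_{\omega_r}(x_0)) - \mathcal{L}_\omega\|\omega-\omega_r\| \ge -\eta^* - \mathcal{L}_\omega\delta .
\end{equation*}
The hypothesis $\eta^*+\mathcal{L}_\omega\delta\le 0$ yields $-\eta^*-\mathcal{L}_\omega\delta\ge 0$, so $\rho^\psi(\overline{\xi}_{\omega}(x_0))\ge 0$.

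The main obstacle is the gap between $\ge 0$ and the strict $>0$ claimed in the statement. I would handle it in one of two ways:
\begin{itemize}
\item Read the hypothesis as strict, $\eta^*+\mathcal{L}_\omega\delta<0$. The chain above then gives strict positivity directly.
\item Keep the non-strict hypothesis and argue that the covering inequality can be taken strict, $\|\omega-\omega_r\|<\delta$, by enlarging $\delta$ infinitesimally. This also gives strict positivity whenever $\mathcal{L}_\omega>0$.
\end{itemize}
I would state this caveat explicitly.

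Two further points need checking. First, Lemma~\ref{lem:rho_lip} is stated for $\omega_1,\omega_2>0$, but here it is applied to vector-valued $\omega\in\mathcal{B}_\infty(\varepsilon^*)$ that may have negative components. Its proof uses only the linearity of $\omega\mapsto\xi_0(x_0)+A_J\omega$ together with Theorem~\ref{thm:rho_lip}, so it extends verbatim to this case. Second, the sampling points must cover the ball at the optimal radius $\varepsilon^*$ specifically, since $\varepsilon$ is itself a decision variable; I would assume the samples are drawn for $\varepsilon^*$.
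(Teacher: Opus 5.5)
This is the paper's own argument (feasibility at the samples $\omega_r$, a sample within distance $\delta$, then Lemma~\ref{lem:rho_lip}), and your caveat is correct: under the non-strict hypothesis the chain only gives $\rho^\psi(\overline{\xi}_{\omega}(x_0))\ge 0$, so your first fix of assuming the strict hypothesis is the right repair. The paper gets strictness only by writing $\mathcal{L}_\omega\delta+\eta^*<0$ in its final step, and when equality holds the strict conclusion can genuinely fail; your remarks on the sign restriction in Lemma~\ref{lem:rho_lip} and on covering $\mathcal{B}_\infty(\varepsilon^*)$ likewise flag points the paper leaves implicit.

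Your second fix does not work as phrased. Enlarging $\delta$ makes the covering strict, but then $\eta^*+\mathcal{L}_\omega\delta\le 0$ is needed at the enlarged radius, so that route really requires the extra assumption that the samples' actual covering radius is strictly smaller than $\delta$.
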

\begin{pf}
At the sampled data points $\omega_r \in \mathcal{B}_{\infty}(\varepsilon),$ for all $r \in \{1, \ldots, M\}$, we have $-\rho^\psi( \overline{\xi}_{\omega_r} (x_0)) \leq \eta^*$. Now, for the continuous patch, i.e., for all $\omega \in \mathcal{B}_{\infty}(\varepsilon)$,
\begin{align*}
    -\rho^\psi( \overline{\xi}_{\omega} (x_0)) 
    &\!=\! -\rho^\psi( \overline{\xi}_{\omega} (x_0)) + \rho^\psi( \overline{\xi}_{\omega_r} (x_0)) - \rho^\psi( \overline{\xi}_{\omega_r} (x_0)) \\
    &\!\leq\! \| \rho^\psi( \overline{\xi}_{\omega} (x_0)) - \rho^\psi( \overline{\xi}_{\omega_r} (x_0)) \| - \rho^\psi( \overline{\xi}_{\omega_r} (x_0)) \\
    &\!\leq\! \mathcal{L}_\omega\| \omega - \omega_r \| + \eta^* \leq \mathcal{L}_\omega\delta + \eta^* < 0.
\end{align*}
This implies $\rho^\psi( \overline{\xi}_{\omega} (x_0)) > 0$ for all $\omega \in \mathcal{B}_{\infty}(\varepsilon)$. \qed
\end{pf}

\begin{remark}  
    The tightness of the resilience metric obtained from Theorem \ref{thm:SOP} can be further examined by establishing a probabilistic relationship between the optimal values of the scenario problem in Equation \eqref{eqn:SOP} and the robust problem in Equation \eqref{eqn:res_lin}. This connection follows from the theoretical guarantees provided by \cite{esfahani2014performance}, which establish a bridge between the optimal values of a robust optimization problem and its scenario-based approximation, and have been utilized in subsequent works by \cite{kazemi2024data} and \cite{nejati2023formal}.
\end{remark}  


\section{Application to Nonlinear systems}\label{sec:nonlin}
Next, we extend the resilience metric computation to a class of nonlinear systems. 
\RD{While the main framework of this paper is developed for linear systems, here we provide a local extension to nonlinear dynamics via linearization around an equilibrium point and treating the higher-order terms as a bounded perturbation. The analysis is local in nature and assumes that trajectories remain within a compact set $X$ around the equilibrium point, in which the linearization is valid and the Hessian is bounded.}
\begin{thm}
    Consider the nonlinear system $\Sigma_{nl}$ in~\eqref{eqn:sysdyn_nl}, with an STL specification $\psi$ over the time interval $[0,t_f]$, as in~\eqref{eqn:stl}. 
    Suppose the system is linearized around an equilibrium point $x_e$, where $f(x_e) = 0$, with the Jacobian matrix defined as $A := \frac{\partial f}{\partial x} |_{x=x_e}$. 
    \RD{Assume that $f$ is twice continuously differentiable and that the trajectories remain inside a compact set $X$ in which the Hessian is bounded.}
    Let the trajectory bounds of the linearized system be given by $\overline{\xi}_\omega$, derived in \eqref{eqn:bd_jacob}, and the corresponding robustness of this signal bound be denoted by $\rho^\psi (\overline{\xi}_\omega)$. Then, the lower bound on the resilience metric is
    \begin{align}
        &g_{\psi} (x_0) \geq \max \{ \varepsilon \geq 0 | \min_{\omega \in \mathcal{B}_{\infty}(\varepsilon)} \rho^\psi( \overline{{\xi}}_{\omega} (x_0) ) > 0\} - \overline{\delta}, \\
        &\text{with, } \overline{\delta} \geq \frac{1}{2}L_H\|x-x_e\|^2, \forall x \in X \notag,
    \end{align}
    provided that the resilience of the linearized system is greater than $\overline{\delta}$, i.e.,
    $$\max \{ \varepsilon \geq 0 | \min_{\omega \in \mathcal{B}_{\infty}(\varepsilon)} \rho^\psi( \overline{{\xi}}_{\omega} (x_0) ) > 0\} \geq \overline{\delta}.$$
    Here $L_H$ is an upper bound on the norm of the Hessian $H(x)$, i.e.,  $\|H(x)\| \leq L_H$ for all $x \in X$.
\end{thm}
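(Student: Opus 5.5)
The plan is to rewrite the nonlinear dynamics as the linearized system driven by an enlarged disturbance, and then apply Theorem~\ref{thm:res_lin} to that linear system. We work in coordinates centred at the equilibrium, i.e.\ we treat $x-x_e$ as the state; this is implicit in using $\overline{\xi}_\omega$ from \eqref{eqn:bd_jacob}, and we may shift so that $x_e=0$.

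\textbf{Step 1 (Taylor remainder).} Since $f$ is twice continuously differentiable and $f(x_e)=0$, Taylor's theorem with remainder gives, for every $x\in X$,
\[
f(x) = A(x-x_e) + R(x), \qquad \|R(x)\| \le \tfrac12 L_H \|x-x_e\|^2 \le \overline{\delta}.
\]
This uses the Hessian bound on $X$, together with convexity of $X$ (or of the segment between $x_e$ and $x$) so that the mean-value form of the remainder applies. Because $\|\cdot\|_\infty\le\|\cdot\|$, we get $R(x)\in\mathcal{B}_\infty(\overline{\delta})$.

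\textbf{Step 2 (reduction to a linear system).} Take any trajectory of $\Sigma_{nl}$ with disturbance $d:\R_0^+\to\mathcal{B}_\infty(\varepsilon)$ that stays in $X$, as assumed. It solves $\dot x = A(x-x_e) + \tilde d(t)$ with
\[
\tilde d(t) := R(x(t)) + d(t), \qquad \|\tilde d(t)\|_\infty \le \varepsilon+\overline{\delta}.
\]
The signal $\tilde d$ is continuous, since $x$, $d$ and $R$ are. So every nonlinear trajectory from $x_0$ is also a trajectory of the linear system $\Sigma_l$ under some admissible disturbance in $\mathcal{B}_\infty(\varepsilon+\overline{\delta})$. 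Consequently Theorem~\ref{thm:bd_jacob}, and hence the implication \eqref{eqn:impli}, apply with radius $\varepsilon+\overline{\delta}$.

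\textbf{Step 3 (conclusion).} Let $\varepsilon_l := \max\{\varepsilon\ge0 \mid \min_{\omega\in\mathcal{B}_\infty(\varepsilon)}\rho^\psi(\overline{\xi}_\omega(x_0))>0\}$ be the linear lower bound from Theorem~\ref{thm:res_lin}. By hypothesis $\varepsilon_l\ge\overline{\delta}$, so $\varepsilon:=\varepsilon_l-\overline{\delta}\ge0$ is a legitimate disturbance level.

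For this $\varepsilon$, Step~2 places every nonlinear trajectory inside the linear envelope of radius $\varepsilon+\overline{\delta}=\varepsilon_l$. On that envelope the minimum robustness is positive, so by \eqref{eqn:impli} every such trajectory satisfies $\psi$. Therefore $\Xi_\varepsilon(x_0)\vDash\psi$ for the nonlinear system, and $g_\psi(x_0)\ge\varepsilon_l-\overline{\delta}$.

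Two technical points need care here. First, if the max defining $\varepsilon_l$ is only a supremum, we run the same argument with $\varepsilon_l-\eta$ and let $\eta\to0$. Second, for the nominal case $\varepsilon=0$ the definition of $g_\psi$ requires $\Xi_0\vDash\psi$; this follows from the same argument.

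\textbf{Main obstacle.} The delicate step is justifying the invariance assumption, namely that trajectories stay in $X$ so that the remainder bound holds along the whole trajectory. The statement assumes this outright, so I would invoke it directly rather than prove it; a fully rigorous version would need a bootstrapping argument. A secondary issue is the mismatch between the 2-norm remainder estimate and the $\infty$-norm disturbance ball, which is handled by the norm inequality used in Step~1.
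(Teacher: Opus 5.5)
Your proposal is correct and follows the paper's argument. You write $f(x)=A(x-x_e)+R(x)$ with the Taylor remainder bounded by $\overline{\delta}$, absorb that remainder into the disturbance so a nonlinear trajectory under $d\in\mathcal{B}_\infty(\varepsilon)$ becomes a linear trajectory under a disturbance in $\mathcal{B}_\infty(\varepsilon+\overline{\delta})$, and apply Theorem~\ref{thm:res_lin} with $\varepsilon=\varepsilon_l-\overline{\delta}$. Your version is more careful than the paper's, which writes the remainder as a scalar and leaves the coordinate shift, the $\|\cdot\|_\infty\le\|\cdot\|$ conversion and the max-versus-sup issue implicit, but the route is the same.
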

\begin{pf}
Linearizing the nonlinear system around $x_e$, we approximate the system dynamics as
$$\dot{x} = f(x) + d(t) = A(x-x_e) + d(t) + \delta(t),$$
where $A$ is the Jacobian of $f(x)$ at $x_e$.
Using Taylor's theorem, the remainder term satisfies
\begin{align*}
    \delta(t) \!=\! \frac{1}{2}(x-x_e)^\top H(x) (x-x_e)
              \!\leq\! \frac{1}{2}L_H\|x-x_e\|^2 \!\leq\! \overline{\delta}.
\end{align*}
By using the resilience metric for the linearized system in Theorem \ref{thm:res_lin} and accounting for the approximation error introduced by linearization, the lower bound of the resilience metric for the nonlinear system is
$$g_{\psi} (x_0) \geq \max \{ \varepsilon \geq 0 | \min_{\omega \in \mathcal{B}_{\infty}(\varepsilon)} \rho^\psi( \overline{{\xi}}_{\omega} (x_0) ) > 0\} - \overline{\delta}.$$\qed
\end{pf}
We demonstrate this approach through a numerical example in Section \eqref{Nonlin_case_study}.

\begin{remark}
\RD{
The nonlinear extension should be interpreted as a local approximation around the equilibrium point rather than a global guarantee. The above result provides a conservative local bound due to the linearization error term $\overline{\delta}$, which depends on $L_H$ and can be obtained either analytically from the system dynamics or numerically over the region of interest $X$.
}

\RD{
One possible extension is to use piecewise linear approximations over a state space region, where local trajectory bounds are computed and combined to obtain a less conservative global estimate. For more gener cases, set-based reachability techniques, such as zonotope methods \cite{girard2005reachability, zonoNonlin}, could be used to compute tighter bounds for nonlinear dynamics. Integrating such approaches with the proposed resilience framework, especially for more general nonlinear systems is an important direction for future work.
}
\end{remark}

\section{Case Studies}
We present four case studies to demonstrate the effectiveness of the proposed approach in quantifying the maximum disturbance a system can tolerate while still satisfying the given STL specification.

\begin{figure}
    \centering
     \begin{subfigure}[b]{0.46\textwidth}
         \centering
         \includegraphics[width=\textwidth]{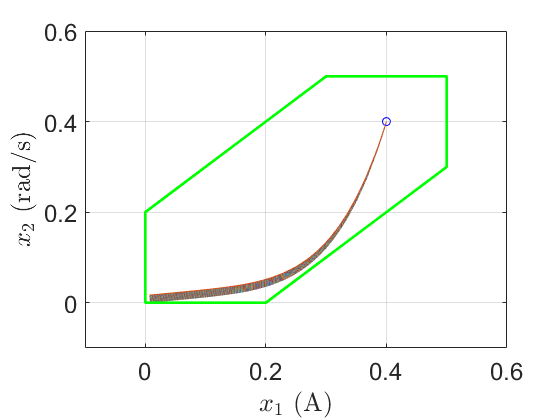}
         \caption{$\psi_1 = \square_{[0,20]} (G_Sx \leq H_S)$}
     \end{subfigure}
     \hfill
     \begin{subfigure}[b]{0.46\textwidth}
         \centering
         \includegraphics[width=\textwidth]{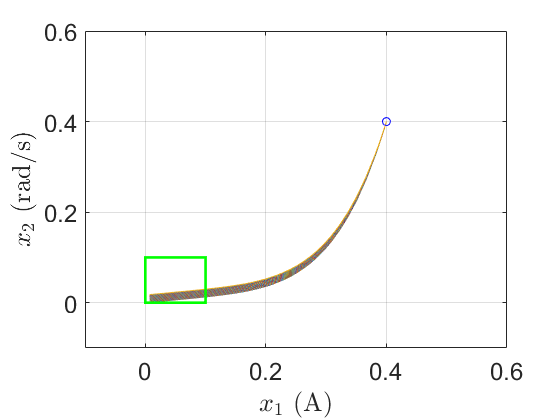}
         \caption{$\psi_2 = \square_{[0,20]} (G_Rx \leq H_R)$}
     \end{subfigure}
    \caption{Evolution of 1000 perturbed trajectories in the DC Motor case. Green lines mark the safe and target set boundaries, showing all trajectories satisfy $\psi_1$ and $\psi_2$.}
    \label{fig:lincon}
\end{figure}

\subsection{DC Motor}
In this example, we consider a DC motor system \cite{DCMotor} with dynamics:
\begin{align*}
    \dot{x}_1 \!=\! - R/L x_1 \!-\! K/L x_2 \!+\! d_1, \
    \dot{x}_2 \!=\! K/J x_1 \!-\! b/J x_2 \!+\! d_2,
\end{align*}
where $x_1$ and $x_2$ are the armature current and shaft angular velocity, respectively, and $d = [d_1, d_2]^\top$ is the additive disturbance. The system parameters are $R = 1 \Omega$, $L = 0.5 H$, $J = 0.01 kg.m^2$, $b = 0.1 N.m.s$, and $K = 0.01$. Starting from $x_0 = [0.4, 0.4]^\top$, we evaluate the resilience metric for the system under two STL specifications: 

\noindent\textbf{Safety Specification:} The system must remain within a safe set for all $t \in [0,20]$s:
\begin{gather*}
    \psi_1 = \square_{[0,20]} (G_Sx \leq H_S), \\
    G_S = \begin{bmatrix}
        -1, 1, 0, 0, -1, 1\\
        0, 0, -1, 1, 1, -1
    \end{bmatrix}^\top, \
    H_S = \begin{bmatrix}
        0, 0.5, 0, 0.5, 0.2, 0.2
    \end{bmatrix}^\top.
\end{gather*}
The maximum disturbance for $\psi_1$ is $0.2089.$

\noindent\textbf{Reachability Specification:} The system must reach a target set at least once in $t \in [15,20]$:
\begin{gather*}
    \psi_2 = \lozenge_{[15,20]} (G_Rx \leq H_R), \\
    G_R = \begin{bmatrix}
        -1, 1, 0, 0\\
        0, 0, -1, 1
    \end{bmatrix}^\top, \
    H_R = \begin{bmatrix}
        0, 0.1, 0, 0.1
    \end{bmatrix}^\top.
\end{gather*}
The maximum admissible disturbance for $\psi_2$ is $0.0187$.

The resilience metric can also be computed using the bounds established in Theorem \ref{thm:bd_abs}. However, the absolute-value-based method tends to overestimate system fragility, yielding significantly smaller values ($2.3178\times10^{-10}$ for $\psi_1$ and $6.5842\times10^{-5}$ for $\psi_2$) compared to the proposed approach, which provides a more accurate and practical measure of disturbance tolerance.


Figure \ref{fig:lincon} illustrates 1000 perturbed trajectories satisfying the safety and reachability specifications, $\psi_1$ and $\psi_2$. 



\begin{figure}[t]
     \centering
     \includegraphics[width=0.6\textwidth]{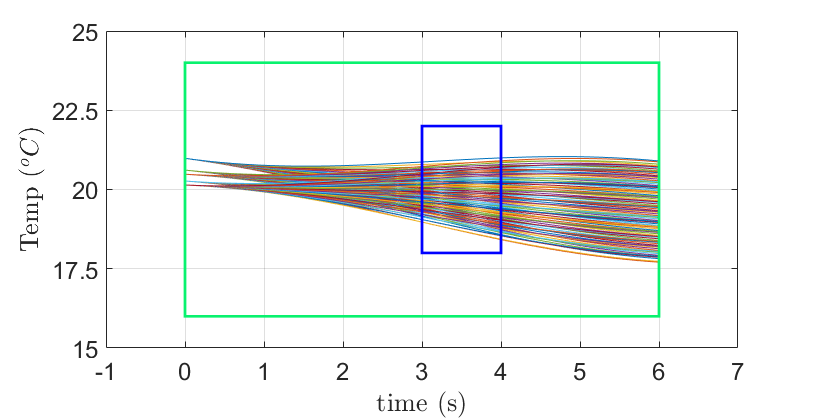}
     \caption{Evolution of 1000 perturbed trajectories in the temperature regulation case. Green and blue correspond to the specifications for $T_{S_1}$ and $T_{S_2}$, respectively.}
     \label{fig:temp}
\end{figure}

\subsection{Temperature Regulation}
Here, we model temperature dynamics \cite{NAHS} in a circular building with five interconnected rooms:
\begin{align*}
    \dot{T}_i = \alpha (T_{i+1} + T_{i-1} - 2T_{i}) + \beta (T_{e} - T_{i}), i = \{1, \ldots, 5\},
\end{align*}
where $\alpha = 0.1$ and $\beta = 0.01$ represent the heat exchange rate between adjacent rooms, and the influence of the external temperature $T_e$, respectively. $T_e$ acts as an external disturbance to the system.

The desired temperature regulation is defined by two constraints: each room's temperature must remain within $T_{S_1} = [16^oC, 24^oC]^5$ over the entire time period $[0,6]$ hours; and during the interval $[3,4]$ hours, each room’s temperature must be maintained within a narrower range of $T_{S_2} = [18^oC, 22^oC]^5$. This objective is captured by the STL specification
\begin{align*}
    \psi(T) = \square_{[0,6]} (T \in T_{S_1}) \wedge \square_{[3,4]} (T \in T_{S_2}).
\end{align*}

The results indicate that the system can tolerate a maximum of $\pm0.4017^oC$ external fluctuations while satisfying $\psi$. Figure \ref{fig:temp} illustrates the results of 1000 simulations where external temperature variations are randomly sampled from the interval $[-0.4017^oC, 0.4017^oC]$.

\begin{figure}[t]
     \centering
     \includegraphics[width=0.5\textwidth]{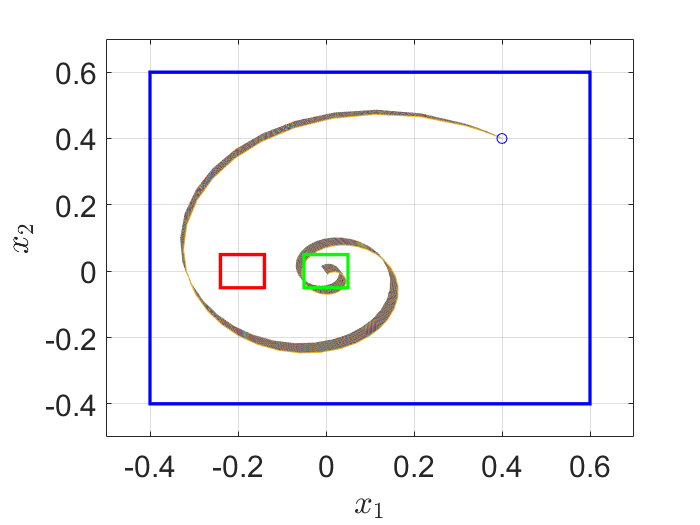}
    \caption{\RD{Evolution of 1000 perturbed trajectories starting from $x_0 = [0.4, 0.4]^\top$. Red, green, and blue denote the boundaries of the sets $U$, $T$, and $S$, respectively.}}
    \label{fig:num_example}
\end{figure}

\begin{figure*}
    \centering
    \begin{subfigure}[b]{0.6\textwidth}
         \centering
         \includegraphics[width=\textwidth]{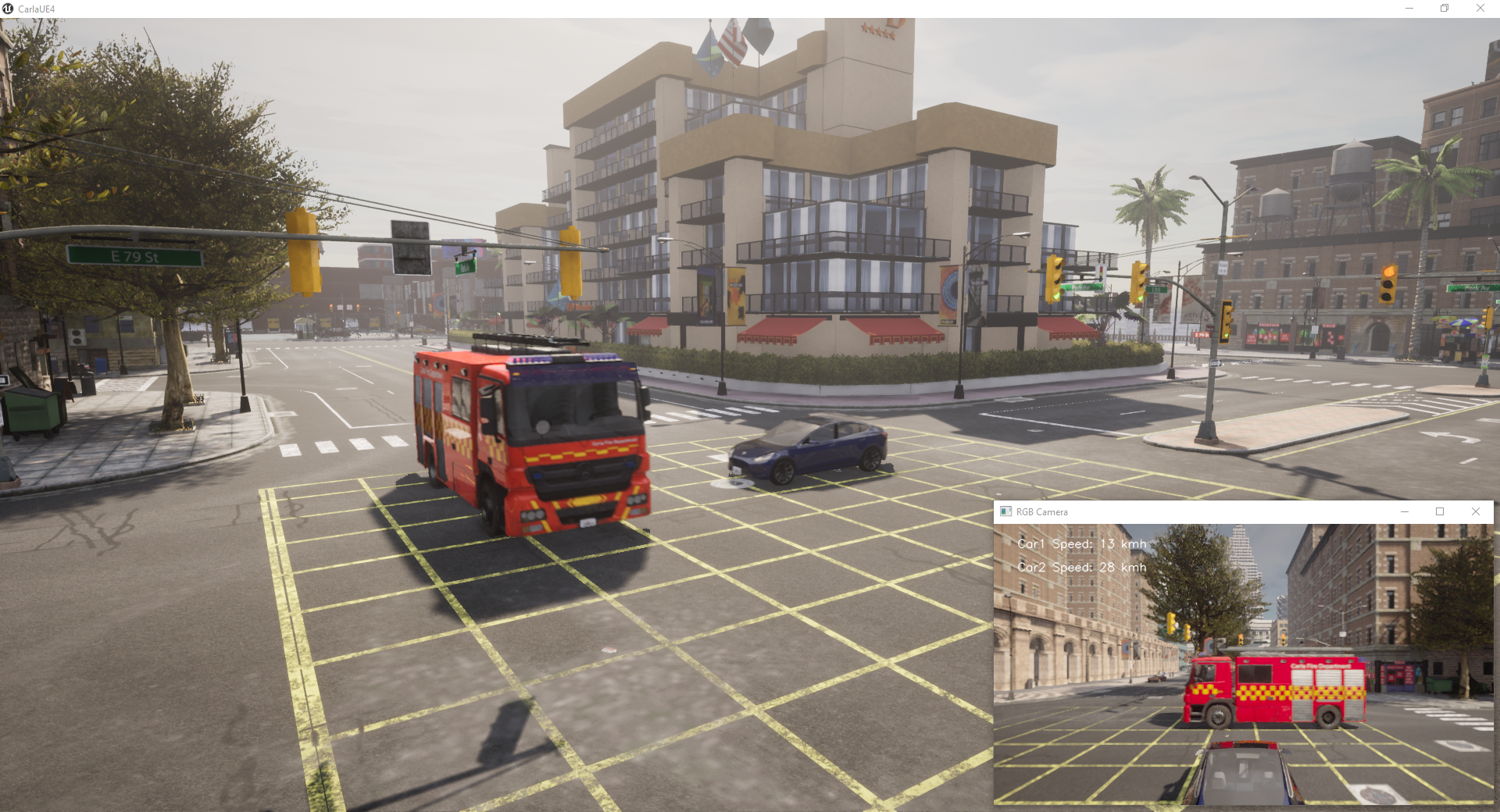}
         \caption{}
     \end{subfigure}\\
     \hfill
     \begin{subfigure}[b]{0.33\textwidth}
         \centering
         \includegraphics[width=\textwidth]{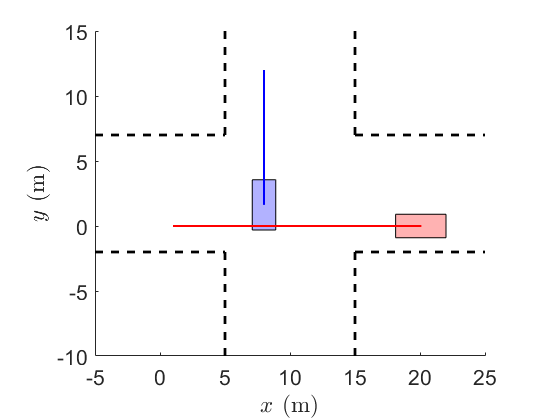}
         \caption{}
     \end{subfigure}
     \hfill
     \begin{subfigure}[b]{0.33\textwidth}
         \centering
         \includegraphics[width=\textwidth]{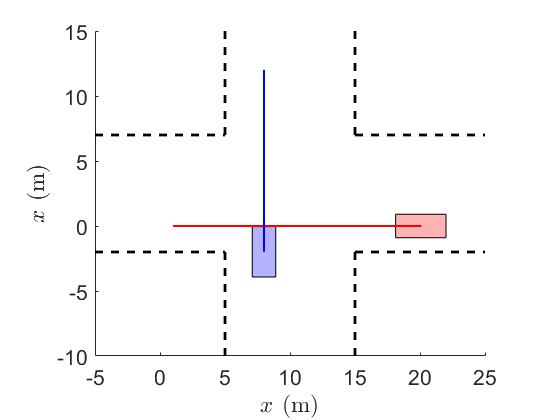}
         \caption{}
     \end{subfigure}
     \hfill
     \begin{subfigure}[b]{0.33\textwidth}
         \centering
         \includegraphics[width=\textwidth]{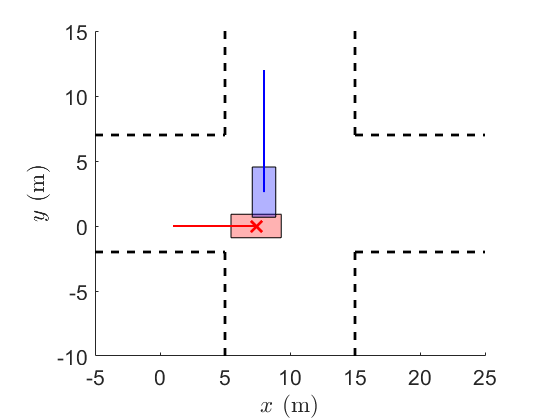}
         \caption{}
     \end{subfigure}
    \caption{(a) Carla simulation environment with the fire truck $V_1$ (red) and autonomous vehicle $V_2$ (blue) at an intersection. \href{https://youtu.be/-R9lWPYW560}{Video Link} (b) Nominal behavior with no disturbance. (c) Collision avoidance when the disturbance is bounded by the computed maximum disturbance. (d) Collision when the disturbance exceeds the computed maximum disturbance.}
    \label{fig:cross}
\end{figure*}

\subsection{Nonlinear Example}
\label{Nonlin_case_study}
This case study examines the nonlinear dynamical system
\begin{align*}
    \dot{x}_1 \!=\! -0.5x_1 \!-\! 2x_2 \!+\! 0.1x_1^2 \!+\! d_1, 
    \dot{x}_2 \!=\! 2x_1 \!-\! 0.5x_2 \!+\! 0.1x_2^2 \!+\! d_2.   
\end{align*}
The eigenvalues of the linearized system are $\lambda_{1,2} = -0.5 \pm 2i$.
Since the real part of the eigenvalues is negative, the system exhibits damped oscillations. However, when subjected to external disturbances, the system's behavior can shift between stable and unstable regimes, making the computation of resilience particularly significant.

\RD{
We evaluate the system under the STL specification:
$$\psi = \lozenge_{[0,7]}\square_{[0,5]} (x \in T) \wedge \square_{[0,12]} \neg(x \in U) \wedge \square_{[0,12]} (x \in S).$$
This condition requires the system to eventually reach a target region $T =[-0.05,0.05] \times [-0.05,0.05]$ and remain inside it for at least $5$ sec within the time window $[0,7]$, while always avoiding the unsafe set $U = [-0.24,-0.14] \times [-0.05,0.05]$ and always remaining inside the bounded region $S = [-0.4,0.6] \times [-0.4,0.6]$ for the interval $[0,12]$ sec.
The maximum admissible disturbance is $0.0396.$
Figure \ref{fig:num_example} presents the results of 1000 simulated trajectories, where disturbances are randomly sampled from the interval, computed using this bound. 
}


{\subsection{Intersection Collision Avoidance}}
\label{Vehicle_case}
This case study demonstrates the practical application of the resilience metric in a safety-critical traffic scenario. It involves a fire truck $V_1$ and an autonomous passenger vehicle $V_2$ at a $90^o$ intersection. The kinematic single-track model adopted from \cite{commonRoad} is used for vehicle dynamics and is linearized around the nominal trajectory. We consider an additive disturbance acting on the velocity of the autonomous vehicle $V_2$.


The collision avoidance specification $\psi$ requires the distance between the two vehicles, $\mathsf{dist}(V_1, V_2)$ to always remain at least a minimum safe distance $D_{\min} = 3m$ over the time horizon $t \in [0,10]s$:
$$\psi = \Box_{[0,10]}\big( \mathsf{dist}(V_1, V_2) \geq D_{\min} \big).$$

The computed maximum admissible disturbance for this task is $g_{\psi}(x_0) = 2.0871$, which is the maximum additive velocity disturbance the autonomous vehicle can tolerate while formally guaranteeing the collision-avoidance specification $\psi$.

Figure \ref{fig:cross} presents the results for three different disturbance scenarios. First, the nominal behavior (no disturbance), where the autonomous vehicle successfully avoids the fire truck. Second, the case where the additive velocity disturbance is bounded by the resilience metric, showing that the collision avoidance is still maintained, thus validating the formal safety guarantee. Finally, the case where the disturbance is $3$~m/s, which exceeds $2.0871$~m/s, leading to a collision between the vehicles. 
Videos of the corresponding Carla \cite{CARLA} simulations can be found at this link:
\href{https://youtu.be/-R9lWPYW560}{https://youtu.be/-R9lWPYW560}.

\section{Conclusion and Future Work}
\label{sec:conclusion}

\vspace{-0.2cm}

In this work, we introduced a framework for quantifying a lower bound on the resilience metric in continuous-time systems with respect to STL specifications, providing a formal approach to determine the maximum disturbance a system can tolerate while maintaining specification satisfaction. We derived trajectory bounds and then employed scenario optimization for computing the maximum admissible disturbance in continuous-time systems. The proposed framework was validated through case studies across various systems, demonstrating its practical applicability. 
\RD{
We also aim to explore how interval-based STL monitoring methods could be combined with the proposed trajectory bounds to evaluate robustness within deviation polytopes.
}

\bibliographystyle{unsrt}
\bibliography{sources}  

\end{document}